\documentclass[%
reprint,
superscriptaddress,
amsmath,amssymb,
aps,
prx,
]{revtex4-2}

\newif\ifmain
\maintrue

\newif\ifincludesi
\includesitrue   %

\ifincludesi
\newcommand{\refappendixthm}[2]{\Cref{#1} of~\Cref{Sec:proofs} of the SI}

\newcommand{\refappendix}[2]{\Cref{#1} of the SI}

\else
\newcommand{\refappendixthm}[2]{\refappendix{#1}{#2}}

\newcommand{\refappendix}[2]{Section~{#2} of the SI}

\fi

\usepackage[version=3]{mhchem}
\usepackage{graphicx} 
\usepackage{lastpage}
\usepackage{float}
\usepackage[english]{babel}
\addto{\captionsenglish}{%
  
}
\usepackage{array}
\usepackage[usenames,dvipsnames]{xcolor}
\usepackage{setspace}
\usepackage{hyperref}
\usepackage{bbm}
\usepackage{cleveref}
\usepackage{physics}

\usepackage{epstopdf}%

\definecolor{cream}{RGB}{222,217,201}

\usepackage{booktabs}

\usepackage{diagbox}

\usepackage{amsthm}
\newtheorem{theorem}{Theorem}
\newtheorem{lemma}{Lemma}
\newtheorem{definition}{Definition}
\newtheorem{corollary}{Corollary}[theorem]

\DeclareMathOperator{\sgn}{sgn}

\newcommand{\citeref}[1]{Ref.~\citenum{#1}}

\newcommand{\func}[2]{#1\left(#2\right)}

\newcommand{\set}[1]{\{{#1}\}}
\newcommand{\card}[1]{\left|{#1}\right|}
\newcommand{\sumcond}[2]{\substack{{#1}, \\ {#2}}}
\newcommand{\bigO}[1]{\mathcal{O} \left(#1 \right)}

\newcommand{\fact}[1]{{#1}!}

\newcommand{\SymN}[1]{\mathcal{S}[1,{#1}]}

\newcommand{\cycleof}[2]{c_{#1}({#2})}

\newcommand{\pos}{{\mathbf{r}}}

\newcommand{\ptclIdx}[2]{#1_{#2}}
\newcommand{\sliceIdx}[2]{#1^{\left(#2\right)}}
\newcommand{\beadIdx}[3]{%
    \sliceIdx{%
        \ptclIdx{#1}{#2}%
    }{#3}%
}

\newcommand{\slicepos}[1]{%
    \sliceIdx{\pos}{#1}%
}

\newcommand{\posbead}[2]{%
    \beadIdx{\pos}{#1}{#2}%
}
\newcommand{\beadpos}[2]{\posbead{#1}{#2}}

\newcommand{\mass}{m}

\newcommand{\physPotForceSymbol}{\mathbf{F}}
\newcommand{\physPotForceBead}[2]{\beadIdx{\physPotForceSymbol}{#1}{#2}}

\newcommand{\springfrequency}{\omega_P}
\newcommand{\springconstant}{\mass \springfrequency^2}
\newcommand{\springenergyprefix}{\frac{1}{2} \springconstant}

\newcommand{\rdiffsquared}[4]{\left(\beadpos{#1}{#2} - \beadpos{#3}{#4}\right)^2}

\newcommand{\repsym}{G}
\newcommand{\rep}[1]{\repsym[{#1}]}

\newcommand{\boltzmann}[1]{e^{-\beta {#1}}}
\newcommand{\genboltzmann}[2]{e^{-#1 #2}}

\newcommand{\Eperm}[1]{E^{#1}}

\newcommand{\Epermorig}[1]{\Eperm{#1}}
\newcommand{\Vtoorig}[1]{V^{[1,{#1}]}}
\newcommand{\Efromtoorig}[2]{E^{[{#1},{#2}]}}
\newcommand{\Enkorig}[2]{\Efromtoorig{{#1}-{#2}+1}{{#1}}}

\newcommand{\unitfmt}[2]{#1\,#2}
\newcommand{\unit}[2]{\unitfmt{#1}{\text{#2}}}
\newcommand{\density}[1]{\unitfmt{#1}{\text{\AA}^{-3}}}

\newcommand{\massunit}[1]{\unitfmt{#1}{\text{u}}}
\newcommand{\temperature}[1]{\unitfmt{#1}{\mathrm{K}}}
\newcommand{\dt}[1]{\unitfmt{#1}{\text{fs}}}
\newcommand{\meV}[1]{\unitfmt{#1}{\mathrm{meV}}}
\newcommand{\angstrom}[1]{\unitfmt{#1}{\text{\AA}}}

\newcommand{\gsfev}[1]{\ev{#1}_{\mathrm{GSF}}}
\newcommand{\pfev}[1]{\ev{#1}_{\mathrm{PF}}}

\newcommand{\gsfweightsym}{w_{\mathrm{GSF}}}

\newcommand{\hw}{\hbar \omega}
\newcommand{\bhw}{\beta \hw}
\newcommand{\bhwP}{\frac{\bhw}{P}}

\newcommand{\StatWeight}[2]{W_{\mathrm{#1}}^{\left(#2\right)}}
\newcommand{\BosonicWeight}[1]{\StatWeight{B}{#1}}
\newcommand{\FermionicWeight}[1]{\StatWeight{F}{#1}}

\newcommand{\EnsAvg}[2]{\ev{#1}_{\mathrm{#2}}}

\newcommand{\MinTrotter}{P^{\ast}}
\newcommand{\MinTrotterPF}{\MinTrotter_{\mathrm{PF}}}
\newcommand{\MinTrotterGSF}{\MinTrotter_{\mathrm{GSF}}}

\newcommand{\RecursiveIntegral}[2]{I_{#1}^{\left(#2\right)}}
\newcommand{\RecursiveIntegralArg}[3]{\func{\RecursiveIntegral{#1}{#2}}{\posbead{#1}{1}, \posbead{#1}{#3}}}
\newcommand{\RecursiveIntegralPerm}[3]{%
    \func{%
        \RecursiveIntegral{#1}{#2}%
    }{%
        \posbead{#1}{1},%
        \posbead{\func{#3}{#1}}{1}%
    }%
}

\newcommand{\uncertainty}[2]{#1 \left(#2\right)}

\begin{document}

\makeatother

\title{Generalized Suzuki-Chin Factorization in Bosonic Path Integral Molecular Dynamics}

\author{Jacob Higer}
\affiliation{School of Physics, Tel Aviv University, Tel Aviv 6997801, Israel.}

\author{Barak Hirshberg}
\email{hirshb@tauex.tau.ac.il}
\affiliation{School of Chemistry, Tel Aviv University, Tel Aviv 6997801, Israel.}
\affiliation{The Ratner Center for Single Molecule Science, Tel Aviv University, Tel Aviv 6997801, Israel.}
\affiliation{The Center for Computational Molecular and Materials Science, Tel Aviv University, Tel Aviv 6997801, Israel.}%

\date{\today} %

\begin{abstract}
Modern implementations of path integral molecular dynamics (PIMD) simulations of distinguishable particles frequently make use of high order factorization schemes for the Boltzmann operator to expedite convergence of equilibrium averages. Among these methods is the generalized Suzuki-Chin factorization (GSF), which is accurate up to fourth order in the imaginary-time step. 
In this work, we show that the GSF decomposition of the Boltzmann operator is applicable to bosonic PIMD, and results in an improved convergence of estimators. In particular, we show that the recently developed quadratic scaling bosonic PIMD need not change when using the GSF. The GSF scheme is implemented as a re-weighting factor for observables, without affecting the sampling generated by the standard, second-order, primitive factorization.
We study the effect of this factorization for bosons in a harmonic trap and a sinusoidal potential. 
We also assess the effectiveness of GSF in calculating fermionic expectation values for harmonically-trapped atoms.
In all these cases, we find that the GSF speeds up convergence with the Trotter number by a factor of $\sim 2-4$ across a wide temperature range, at only a modest computational cost.
\end{abstract}

\maketitle

\section{Introduction}
Path integral molecular dynamics (PIMD) is an important method for studying nuclear quantum effects at finite temperatures~\cite{Parrinello1984,Markland2018}. It is well suited for calculating equilibrium properties of quantum systems, although several generalizations of PIMD have been proposed to study dynamical properties as well~\cite{Manolopoulos2004,Habershon2013,Althorpe2021}. PIMD of distinguishable particles is rooted in the idea that each quantum particle can be mapped to a classical ring polymer, consisting of harmonically coupled \emph{beads} (or \emph{replicas}), with a spring stiffness that is proportional to the temperature of the quantum system~\cite{Ceperley1995}. This isomorphism of partition functions maps a complicated quantum problem into a classical one, which is more amenable to efficient computation. Although the classical system is less computationally demanding than its quantum counterpart, it often requires a large number of beads, making it desirable to find ways of reducing the number of replicas required for convergence.

While performing PIMD simulations of distinguishable particles is nowadays straightforward, taking into account %
exchange symmetry between indistinguishable quantum particles is more challenging, as it requires one to consider exponentially many ring-polymer configurations differing by the way they connect to each other. Recently,~\citet{Hirshberg2019} showed that this combinatorial explosion can be avoided by using an equivalent bosonic potential that obeys a recurrence relation. %
With this approach, the PIMD algorithm for bosonic systems scales only quadratically with the system size~\cite{Feldman2023,Higer2025}, enabling, for the first time, PIMD simulations of thousands of bosons~\cite{Feldman2023}. 

The bosonic PIMD scheme was based on the so-called \emph{primitive factorization} (PF) of the Boltzmann operator, which is accurate up to a second order in the imaginary-time slice $\tau = \beta / P$, where $\beta = (k_B T)^{-1}$ is the inverse temperature and $P$ is the number of beads. %
Both path integral Monte Carlo (PIMC) and distinguishable-particle PIMD frequently employ higher-order factorization schemes to accelerate convergence with respect to the number of imaginary-time slices~\cite{Perez2011,Kapil2016,Jang2001,Yamamoto2005}. 

One of the earliest such schemes, introduced by Takahashi and Imada (TI) in the context of PIMC~\cite{TakahashiImada1984}, required the evaluation of the derivatives of the physical potential. While the TI scheme is known to be effective at reducing the number of beads needed for convergence~\cite{Brualla2004,Yamamoto2005,Lindoy2018}, it also has notable drawbacks. In particular, the derivation of estimators is less straightforward than in other approaches, and the resulting expressions can be rather involved, as exemplified by the estimator for the radial distribution function~\cite{Jang2001}. \citet{Suzuki1995_PhysLettA} and~\citet{Chin1997} proposed a closely related method which, unlike the TI scheme, is based on a genuine factorization of the Boltzmann operator and permits simpler estimators. It was later shown by~\citet{Jang2001} that this method yields a comparable, or in some cases superior, convergence speedup across different model systems. We refer to it as the \emph{generalized Suzuki-Chin factorization} (GSF).

Both the TI and GSF methods have been subsequently implemented in PIMD~\cite{Perez2011,Ceriotti2016,Zhang2025}. What is uniquely challenging about the PIMD implementations is the need in evaluating the forces that stem from the additional terms in the classical ring-polymer Hamiltonian. In particular, they require the evaluation of the Hessian of the potential. As a workaround, \citet{Jang2001} suggested performing simulations in the ``ordinary'' PF ensemble and then re-weighting the observables with appropriate GSF weights.~%
\citet{Perez2011} have later refined the weighting factor and extended the method to estimate spatial and momentum distribution functions. However, when the PF and the GSF distributions have little overlap, re-weighting can become statistically inefficient, especially at low $P$~\cite{Ceriotti2011}. As an alternative,~\citet{Kapil2016} proposed to evaluate the Hessian using finite-differences, allowing for a more ergodic sampling of the phase space at a relatively low computational cost.

These advancements in high order factorization schemes have not been applied to PIMD of identical particles, despite being commonplace in distinguishable-particle PIMD, as well as  PIMC~\cite{Jang2001,Yamamoto2005}.
The purpose of the present work is to integrate the GSF scheme into bosonic PIMD. In this work, we show that this can be done while maintaining the same favorable quadratic scaling as in the original algorithm.

The remainder of this paper is organized as follows. In~\Cref{Sec:background}, we present the required theoretical background, with~\Cref{Sec:background-pimd} focusing on PIMD of distinguishable particles,~\Cref{Sec:background-bosonic-pimd} revisiting its bosonic and fermionic extensions, and~\Cref{Sec:background-gsf} discussing the GSF.~\Cref{Sec:results} presents the results of the paper. In~\Cref{Sec:results-gsf-bosons-theory} we discuss the theoretical reason that enables us to integrate GSF into the quadratic scaling PIMD algorithm. In~\Cref{Sec:exact-qho-discretized-partition-function} we derive an exact, finite-$P$ benchmark to compare our approach against, for bosons in a harmonic trap. Following that, in~\Cref{Sec:results-numerical} we present the numerical results validating the effectiveness of our method for two model systems: the harmonic trap and the sinusoidal field.

\section{Background}
\label{Sec:background}
\subsection{PIMD of distinguishable particles}
\label{Sec:background-pimd}
In computational path integral methods, the continuous Feynman path integral representation of the canonical partition function is discretized into a finite number ($P$) of imaginary-time intervals. This is done by replacing the Boltzmann operator at temperature $T$ with a product of multiple Boltzmann operators at a higher temperature $P T$. Since, in general, the kinetic and potential operators do not commute, the next crucial step is finding an approximation to the high-temperature Boltzmann operator that is both sufficiently accurate and simple enough to evaluate in the position basis.

Different approximations can be devised, and depending on the method, the associated discretization error might vary. 
The simplest and most common approach involves the second order Trotter factorization,
\begin{equation}
\label{Eq:primitive-factorization}
\genboltzmann{
    \tau
}{
    \hat{H}
}
=
\genboltzmann{
    \frac{\tau}{2}
}{
    \hat{V}
}
\genboltzmann{
    \tau
}{
    \hat{T}
}
\genboltzmann{
    \frac{\tau}{2}
}{
    \hat{V}
}
+
\bigO{
    \tau^3
}.
\end{equation}
This primitive factorization is called second order because the overall discretization error in this case is proportional to $P^{-2}$. For a $d$-dimensional system of $N$ distinguishable particles of mass $m$, the PF results in the familiar expression,%
\begin{equation}
\label{Eq:pf-distinguishable-partition-function}
Z 
\propto
\int 
\dd[NP]{\pos} \,
\boltzmann{
    \left(
        E + \bar{U}
    \right)
},
\end{equation}
where $\bar{U}$ is the scaled potential, defined as
\begin{equation}
\label{Eq:scaled-phyiscal-potential}
\bar{U} = \frac{1}{P} \sum_{s=1}^{P} {
    \func{V}{\beadpos{1}{s}, \dots, \beadpos{N}{s}}
},
\end{equation}
and $\slicepos{s} = \beadpos{1}{s}, \dots, \beadpos{N}{s}$ represents the positions of all particles at the imaginary-time slice $s$ whose range is $s=1,\dots,P$. The other quantity, $E$, is the spring energy of the ring polymers,
\begin{equation}
\label{Eq:distinguishable-spring-energy}
E =
\sum_{\ell=1}^{N}
\sum_{s=1}^{P}{
    \frac{1}{2} m \springfrequency^2 \rdiffsquared{\ell}{s}{\ell}{s+1}
},
\end{equation}
with the spring frequency $\omega_P = \sqrt{P} / (\beta \hbar)$ and the cyclic closure condition $\beadpos{\ell}{P+1} = \beadpos{\ell}{1}$. In the limit of $P\to\infty$, the classical partition function of~\Cref{Eq:pf-distinguishable-partition-function} coincides with the partition function of the original quantum system. By redefining the constant prefactor in~\Cref{Eq:pf-distinguishable-partition-function} and introducing fictitious momenta, the multidimensional integral can be sampled using molecular dynamics techniques.
\subsection{PIMD of indistinguishable particles}
\label{Sec:background-bosonic-pimd}
To properly account for bosonic exchange, the discretized partition function must include a sum over all $\fact{N}$ permutations of particle labels at the last imaginary-time slice, such that
\begin{equation}
\label{Eq:bosonic-factorial-partition-function}
Z_B \propto
\int 
\dd[NP]{\pos} \,
\frac{1}{\fact{N}}
\sum_{\sigma}
\boltzmann{
    \left(
        \Epermorig{\sigma} + \bar{U}
    \right)
}.
\end{equation}
The permutation-dependent spring energy {$\Epermorig{\sigma}$} has the same form as~\Cref{Eq:distinguishable-spring-energy}, but with a modified path closure condition, $\beadpos{\ell}{P+1} = \beadpos{\sigma(\ell)}{1}$. Note that \Cref{Eq:bosonic-factorial-partition-function} is based on the primitive factorization, and that only the kinetic part is affected by exchange, since the potential is invariant under particle permutations. Direct sampling of~\Cref{Eq:bosonic-factorial-partition-function} using PIMD is computationally prohibitive, as the number of permutations grows factorially with the system size. Previous works~\cite{Hirshberg2019,Feldman2023} solved this combinatorial explosion by re-writing the partition function of~\Cref{Eq:bosonic-factorial-partition-function} as
\begin{equation}
\label{Eq:bosonic-quadratic-partition-function}
Z_B \propto
\int 
\dd[NP]{\pos} \,
\boltzmann{
    \left(
        \Vtoorig{N} + \bar{U}
    \right)
},
\end{equation}
where the bosonic spring potential $\Vtoorig{N}$ is defined recursively via
\begin{equation}
\label{Eq:forward-potential-recurrence}
\boltzmann{\Vtoorig{N}} 
=
\frac{1}{N} \sum_{k=1}^{N}
\boltzmann{
\left(
    \Vtoorig{N-k} + \Enkorig{N}{k}
\right)
},
\end{equation}
with the initial value $\Vtoorig{0} = 0$. Here, $\Enkorig{N}{k}$ denotes the spring energy associated with a ring-polymer that sequentially connects particles $N - k + 1, \dots, N$, i.e.,
\begin{equation}
\label{Eq:cycle-energy-of-last-k-particles}
\Enkorig{N}{k} = 
\springenergyprefix 
\sum_{\ell=N-k+1}^{N}
\sum_{s=1}^{P}
\rdiffsquared{\ell}{s+1}{\ell}{s}
,
\end{equation}
where $\beadpos{\ell}{P+1}=\beadpos{\ell+1}{1}$ except $\beadpos{N}{P+1}=\beadpos{N-k+1}{1}$.

\Cref{Eq:forward-potential-recurrence} makes it possible to sample the bosonic partition function using PIMD in polynomial time at each step. As was shown by %
~\citet{Feldman2023}, further improvement in the scaling can be achieved by introducing a recurrence relation for the cycle energies $\Enkorig{N}{k}$, extending cycles one particle at a time. Using both recurrence relations in tandem results in a bosonic PIMD algorithm that scales as $\bigO{N^2 + PN}$.

Fermionic expectation values can be estimated from bosonic averages by re-weighting with the corresponding sign~\cite{Troyer2005,Hirshberg2020,Dornheim2020,PhysRevB.111.014521,PhysRevE.107.055308,PhysRevE.110.065303,10.1063/5.0106067,10.1063/5.0171930}. The sign is defined as
\begin{equation}
\label{Eq:fermion-sign}
s = \FermionicWeight{N} / \BosonicWeight{N},
\end{equation}
where $\BosonicWeight{N} = \boltzmann{\Vtoorig{N}}$ is the statistical weight of the bosonic configurations, and $\FermionicWeight{N}$ is the weight associated with the fermionic configurations, which can be calculated recursively using
\begin{equation}
\label{Eq:fermion-weight-recurrence}
\FermionicWeight{N}
= 
\frac{1}{N} 
\sum_{k=1}^{N}
\left(-1\right)^{k-1}
\boltzmann{
    \Enkorig{N}{k}
}
\FermionicWeight{N - k},
\end{equation}
with $\FermionicWeight{0} = 1$. Once $s$ is known, it can be used to calculate the fermionic expectation value of an observable $O$ as
\begin{equation}
\EnsAvg{O}{F}
=
\frac{
    \EnsAvg{O s}{B}
}{
    \EnsAvg{s}{B}
},
\end{equation}
where $\EnsAvg{\dots}{B/F}$ denotes an ensemble average of either the bosonic or the fermionic system, respectively.

\subsection{Generalized Suzuki-Chin factorization}
\label{Sec:background-gsf}
The generalized Suzuki-Chin factorization (GSF) is given by~\cite{Suzuki1995_PhysLettA,Chin1997,Jang2001}
\begin{equation}
\label{Eq:gsf}
    \genboltzmann{
        2 \tau
    }{
        \hat{H}
    }
    =
    \genboltzmann{
        \frac{\tau}{3}
    }{
        \hat{V}_e
    }
    \genboltzmann{
        \tau
    }{
        \hat{T}
    }
    \genboltzmann{
        \frac{4 \tau}{3}
    }{
        \hat{V}_m
    }
    \genboltzmann{
        \tau
    }{
        \hat{T}
    }
    \genboltzmann{
        \frac{\tau}{3}
    }{
        \hat{V}_e
    }
    +
    \bigO{
        \tau^5
    }.
\end{equation}
Here, the effective potential operators $\hat{V}_e$ and $\hat{V}_m$ are defined as
\begin{equation}
\label{Eq:gsf-aux-potentials}
\begin{aligned}
\hat{V}_e 
&= 
\hat{V} 
+ 
\frac{\alpha}{6 m}
\sum_{\ell=1}^N{
    \left(
        \frac{\beta \hbar}{P} \grad_{\ell}{\hat{V}}
    \right)^2
},
\\
\hat{V}_m 
&= 
\hat{V} 
+ 
\frac{1 - \alpha}{12 m}
\sum_{\ell=1}^N{
    \left(
        \frac{\beta \hbar}{P} \grad_{\ell}{\hat{V}}
    \right)^2
},
\end{aligned}
\end{equation}
where the scalar parameter $\alpha$ can be chosen arbitrarily, as long as $0 \leq \alpha \leq 1$. Note that the term appearing in the brackets is proportional to the force acting on particle $\ell$ within the given imaginary-time slice. The GSF is a fourth-order factorization because its overall discretization error is proportional to $P^{-4}$.

Following the convention of~\citet{Perez2011}, the resulting partition function for $N$ distinguishable particles of mass $m$ in $d$ dimensions can be written as
\begin{equation}
\label{Eq:tuckerman-gsf-partition-function}
Z 
\propto
\int 
\dd[NP]{\pos} \,
\boltzmann{
    \left(
        E + \bar{U}
    \right)
}
\func{\gsfweightsym}{\posbead{1}{1},\dots,\posbead{N}{P}}
,
\end{equation}
where the total spring energy $E$ and the scaled potential $\bar{U}$ are the same as in the PF (\Cref{Eq:scaled-phyiscal-potential,Eq:distinguishable-spring-energy}), and $\gsfweightsym$ is the weighting factor associated with the GSF, which is given by
\begin{equation}
\label{Eq:gsf-weighting-factor}
\begin{aligned}
& 
\func{
    \gsfweightsym
}{
    \posbead{1}{1},\dots,\posbead{N}{P}
}
=
\exp\Biggl\{
-\beta \sum_{s=1}^{P/2}
\Biggl[
    \frac{
        \func{V}{\slicepos{2s}} 
        - 
        \func{V}{\slicepos{2s - 1}}
    }{
        3P
    }
\\
&
\phantom{\gsfweightsym}
+ 
\frac{1}{9 m \springfrequency^2 P^2}
\sum_{\ell=1}^{N}
\Biggl(
    \alpha
    \left(
        \physPotForceBead{\ell}{2s-1}
    \right)^2
    +
    (1-\alpha)
    \left(
        \physPotForceBead{\ell}{2s}
    \right)^2
\Biggr)
\Biggr]
\Biggr\}
,
\end{aligned}
\end{equation}
where $\physPotForceBead{\ell}{s} = -\grad_{\ell}{\func{V}{\slicepos{s}}}$ is the classical force acting on bead $s$ of particle $\ell$ due to the physical potential $V$ originating from the quantum Hamiltonian. As noted in~\citeref{Perez2011}, decreasing $P$ reduces the weighting factor, which can increase statistical noise.
Conversely, when $P$ is sufficiently large, $\gsfweightsym$ is close to unity, and the GSF scheme coincides with the PF.

The structure of the Suzuki-Chin factorization in~\Cref{Eq:gsf} implies that odd and even replicas have different statistical weights associated with them. This has an effect on certain estimators, such as the potential energy estimator, where it is preferable to include only the contributions due to the dominant imaginary-time slices, which in our case are the odd replicas. %

A key advantage of expressing the partition function in terms of the GSF weighting factor, is that it allows one to carry out PIMD simulations using the standard PF action, and subsequently obtain expectation values via re-weighting. In this approach, one generates configurations based on the second-order PF action, and evaluates both the observable of interest, $O$, and the weighting factor, $\gsfweightsym$, for the different configurations. The end result is then given by
\begin{equation}
\label{Eq:gsf-observable-as-reweighted-primitive}
\gsfev{O} = 
\frac{
    \pfev{
        O \gsfweightsym
    }
}{
    \pfev{
        \gsfweightsym
    }
},
\end{equation}
where $\ev{\dots}_{\mathrm{PF} / \mathrm{GSF}}$ denotes an ensemble average with respect to either the PF or the GSF action, respectively.

\section{Results}
\label{Sec:results}
\subsection{GSF with bosonic exchange}
\label{Sec:results-gsf-bosons-theory}
In the case of bosons, the canonical density operator acts only within the symmetric subspace of the $N$-particle Hilbert space. %
Because of this, the matrix element of the canonical density operator at the last imaginary-time slice becomes permutation-dependent, $
\mel{
    \posbead{1}{P}, 
    \dots, 
    \posbead{N}{P}
}{
    \genboltzmann{\tau}{\hat{H}}
}{
    \posbead{\sigma(N)}{1}, 
    \dots, 
    \posbead{\sigma(N)}{1}
}
$. In the primitive factorization (\Cref{Eq:primitive-factorization}), the evaluation of this matrix element results in a product of a kinetic term and a potential term. The latter is independent of permutation, because the potential operator, $\hat{V}$, is invariant under permutation of particle labels. Therefore, bosonic exchange is entirely captured by the kinetic term which dictates how the ring-polymers are connected across the neighboring imaginary-time slices.

Fortunately, the same reasoning holds also for the GSF (\Cref{Eq:gsf}). Instead of the bare potential operator $\hat{V}$, the GSF consists of effective potential operators $\hat{V}_e$ and $\hat{V}_m$. 
According to~\Cref{Eq:gsf-aux-potentials}, both potentials consist of $\hat{V}$ together with a sum over the forces it generates on all particles within a given imaginary-time slice.
Crucially, in both cases, the sum runs over all the $N$ particles, and each term carries the same weight, making the entire expression invariant under permutation of particle labels. Therefore, just like in the PF, bosonic exchange %
leaves a mark solely on the kinetic part of the density matrix, leading to the partition function %
\begin{equation}
\label{Eq:gsf-bosonic-partition-function}
Z_B \propto
\int 
\dd[NP]{\pos} \,
\frac{1}{\fact{N}}
\sum_{\sigma}
\boltzmann{
    \left(
        \Epermorig{\sigma} + \bar{U}
    \right)
}
\func{\gsfweightsym}{\posbead{1}{1},\dots,\posbead{N}{P}}.
\end{equation}
It is precisely this fact that allows us to seamlessly integrate the GSF scheme into the existing quadratic scaling PIMD algorithm for bosons. This the first key result of this paper.
The implementation of periodic boundary conditions~\cite{Higer2025} is also unaffected by GSF, because they too are encoded only in the kinetic term.

As far as force and estimator calculations are concerned, the changes required by the GSF are the same as in the distinguishable PIMD. In particular, if one opts for a direct calculation of the forces induced by the effective potentials in~\Cref{Eq:gsf-aux-potentials}, these will affect the beads independently from the bosonic forces. Likewise, estimators will have the same form as in the previous quadratic bosonic algorithm, except for the terms that depend on the potential---these will be amended exactly as in the distinguishable-particle algorithm. For instance, the thermodynamic kinetic energy estimator will have the form
\begin{equation}
\label{Eq:gsf-bosonic-td-ke-estimator}
\begin{aligned}
\ev{K} 
& = 
\frac{dPN}{2\beta}
+
\Biggl\langle
    \Vtoorig{N}
    +
    \beta \pdv{
        \Vtoorig{N}
    }{
        \beta
    }
\\
&
+
\frac{1}{9 m \springfrequency^2 P^2}
\sum_{\ell=1}^{N}
\sum_{s=1}^{P / 2}
\Biggl[
    \alpha
    \left(
        \physPotForceBead{\ell}{2s-1}
    \right)^2
    +
    (1-\alpha)
    \left(
        \physPotForceBead{\ell}{2s}
    \right)^2
\Biggr]
\Biggr\rangle
,
\end{aligned}
\end{equation}
which is similar to the distinguishable-particle case, apart from the spring energy expectation term, which in the bosonic case is replaced with the quantity $\Vtoorig{N}
+
\beta \pdv{
    \Vtoorig{N}
}{
    \beta
}$.
The situation is further simplified for the virial and potential energy estimators, whose expressions are unaffected by quantum statistics~\cite{Hirshberg2020}, so that their GSF formulas are identical to those of distinguishable particles. %
Consequently, we calculate the thermal average of the quantum potential energy using the familiar GSF expression
\begin{equation}
\label{Eq:gsf-bosonic-pe-estimator}
\ev{V}
=
\frac{2}{P}
\sum_{s=1}^{P/2}
\ev{
    \func{V}{\slicepos{2s - 1}}
}
.
\end{equation}
Similarly, the GSF expression for the virial kinetic energy estimator is
\begin{equation}
\label{Eq:gsf-bosonic-virial-estimator}
\ev{K_{\mathrm{vir}}}
=
\frac{1}{P}
\sum_{s=1}^{P/2}
\sum_{\ell=1}^{N} 
\ev{
    \beadpos{\ell}{2s - 1}
    \cdot
    \grad_{\ell}{\func{V}{\slicepos{2s - 1}}}
}
.
\end{equation}

So far, we have discussed bosonic expectation values. However, it should be possible, in principle, to evaluate fermionic expectation values by combining GSF re-weighting with sign re-weighting, i.e.,
\begin{equation}
\label{Eq:gsf-fermionic-expectation}
\EnsAvg{O}{GSF,F} 
= 
\frac{
    \EnsAvg{
        O s \gsfweightsym
    }{
        PF,B
    }
}{
    \EnsAvg{
        s \gsfweightsym
    }{
        PF,B
    }
},
\end{equation}
where $s$ is the sign as it was defined in~\Cref{Eq:fermion-sign},
and $\EnsAvg{\dots}{PF,B}$ denotes an average with respect to the bosonic ensemble in the primitive factorization.

\subsection{Exact finite-$P$ partition function for bosons in a harmonic trap}
\label{Sec:exact-qho-discretized-partition-function}
To analyze the performance of PF and GSF, it is useful to have an exact benchmark for each value of $P$. The harmonic trap provides a particularly convenient choice. %
Previous works~\cite{Schweizer1981,TakahashiImada1984,Kamibayashi2016} derived exact expressions for the finite-$P$ partition function of distinguishable particles by direct integration of the full discrete path integral expression. In~\refappendix{Sec:proofs}{II}, we exploit the recursive structure of Gaussian integrals to derive a different, but equivalent expression, more convenient for extension to indistinguishable particles. 

First, we prove that the canonical discretized partition function within the PF, for $N$ distinguishable particles in a $d$-dimensional harmonic trap, is given by
\begin{equation}
\label{Eq:discretized-qho-z-pf-dist}
Z_P^{\mathrm{PF}}
=
\left(
\frac{1}{
    \func{
        F_{2P + 1}
    }{
        \bhwP
    }
    +
    \func{
        F_{2P - 1}
    }{
        \bhwP
    }
    - 2
}
\right)^{\frac{dN}{2}},
\end{equation}
where $\func{F_{n}}{x}$ are the Fibonacci polynomials. We further show that this expression is equivalent to those of Refs.~\citenum{Schweizer1981,TakahashiImada1984}. In the limit of $P\to\infty$,~\Cref{Eq:discretized-qho-z-pf-dist} approaches the exact quantum partition function $Z=\left(\frac{1}{2}\csch\left(\frac{1}{2}\bhw\right)\right)^{dN}$, as expected.
Likewise, for the GSF, we prove that
\begin{equation}
\label{Eq:discretized-qho-z-gsf-dist}
Z_P^{\mathrm{GSF}}
=
\left(
\frac{1}{
    J_{P - 1}
    - 2
}
\right)^{\frac{dN}{2}},
\end{equation}
where $J_n=A_n+B_n$, and $A_n$ and $B_n$ are polynomials defined by the recurrence relations
\begin{equation}
\label{Eq:gsf-recurrence-polynomials}
\begin{cases}
A_{n} = \frac{
    A_{n - 1} C_{n} - 1
}{
    C_{n - 1}
},
\\
B_{n} = B_{n - 1} + k_{n + 1} C_{n - 1},
\\
C_{n} = B_{n - 1} + \left(k_{n + 1} + 1\right) C_{n - 1},
\\
\end{cases}
\end{equation}
with the initial conditions
\begin{equation}
\label{Eq:gsf-recurrence-initial-conditions}
\begin{cases}
A_{1} = 1 + k_2 + k_1 \left(k_2 + 2\right), 
\\
B_{1} = k_2 + 1,
\\
C_{1} = k_2 + 2,
\end{cases}
\end{equation}
where
\begin{equation}
\label{Eq:gsf-recurrence-alternating-constant}
k_n = 
\begin{cases}
\frac{2}{3}\left(\bhwP\right)^{2} 
\left[
    1+\frac{\alpha}{3}\left(\bhwP\right)^{2}
\right]
,
& 
n\in\mathrm{odd},
\\
\frac{4}{3}
\left(\bhwP\right)^{2}
\left[
    1+\frac{1 - \alpha}{6}\left(\bhwP\right)^{2}
\right],
&
n\in\mathrm{even}.
\end{cases}
\end{equation}
Note that the PF case is recovered when $k_n = \left(\bhwP\right)^2$ for all $n$. \Cref{Eq:discretized-qho-z-pf-dist,Eq:discretized-qho-z-gsf-dist} can be used with $\ev{E_P}=-\partial_{\beta} \ln Z_P$ to obtain the finite-$P$ energy of distinguishable particles for both factorization methods.

To the best of our knowledge, no similar finite-$P$ exact expression was published for the partition function of indistinguishable particles. %
In~\refappendixthm{Thm:bosonic-pf-partition-function-as-recursive-integral}{II}, we derive the finite-$P$ partition function of $N$ indistinguishable particles in a $d$-dimensional harmonic trap, and show it is given by the recurrence relation,
\begin{equation}
\label{Eq:harmonic-finite-trotter-partition-function-recurrence}
Z_{P,N}
=
\frac{1}{N}
\sum_{\ell = 1}^{N}
\left(\pm 1\right)^{\ell - 1}
\left[
    2 \func{
        T_{\ell}
    }{
        \frac{J_{P-1}}{2}
    }
    - 2
\right]^{-\frac{d}{2}}
Z_{P,N-\ell}
,
\end{equation}
where $\func{T_n}{x}$ is the $n$th Chebyshev polynomial of the first kind and the initial condition is $Z_{P,0}=1$. The sign depends on the type of particles, with $+1$ corresponding to bosons and $-1$ to fermions. This expression is valid for both PF and GSF, with their distinction encoded solely in the quantity $J_{P-1}$. When $P \to \infty$,~\Cref{Eq:harmonic-finite-trotter-partition-function-recurrence} reduces to the familiar recurrence relation derived in~\citeref{Borrmann1993}.
By taking the logarithmic derivative of~\Cref{Eq:harmonic-finite-trotter-partition-function-recurrence} with respect to $\beta$, one obtains a recurrence relation for the exact energy of indistinguishable particles at every $P$. 

\subsubsection{The optimal value of $\alpha$}
\label{Sec:optimal-alpha}
We introduce a quantitative criterion for the improvement afforded by the GSF. 
The same criterion will be used to analyze the numerical PIMD results in~\Cref{Sec:results-numerical}.
We consider an observable ``converged'' when its relative error falls below $\delta_{\mathrm{tol}} = 0.5\%$, i.e., when $
\abs{E_P - E_{\mathrm{exact}}}
/ 
\abs{E_{\mathrm{exact}}} \leq 0.5 \%
$. We find this threshold to be reasonable, as it offers adequate accuracy without overestimating the GSF speedup. %
The smallest integer $P$ satisfying this condition is denoted by $P^{\ast}_{\mathrm{PF}/\mathrm{GSF}}$, depending on the factorization. In the case of the GSF, only even values of $P$ are allowed.
The \emph{speedup} is then defined as the ratio $P^{\ast}_{\mathrm{PF}}/P^{\ast}_{\mathrm{GSF}}$. 

 With this definition, we %
 proceed to study how the GSF parameter $\alpha$ affects the exact convergence rate for $N=32$ bosons in a three-dimensional harmonic trap.~\Cref{Fig:bosonic-harmonic-optimal-alpha-heatmap} shows $P^{\ast}_{\mathrm{GSF}}$, the number of beads needed to converge $\ev{K}$ for temperatures in the range $\bhw = 1 - 6$. We observe that as the temperature decreases, $\MinTrotterGSF$ becomes more sensitive to changes in $\alpha$. Across all temperatures considered, the optimal value of the tuning parameter is found to be $\alpha = 0$. Moreover, at this value, $\MinTrotterGSF$ varies more slowly with temperature and remains in single digits down to the lowest temperature considered, $\bhw = 6$. %
 Having determined $\alpha = 0$ to be optimal for the harmonic trap, we use this value throughout the paper, including for the sinusoidal potential.

\begin{figure}
\centering
  \includegraphics[width=1.0\linewidth]{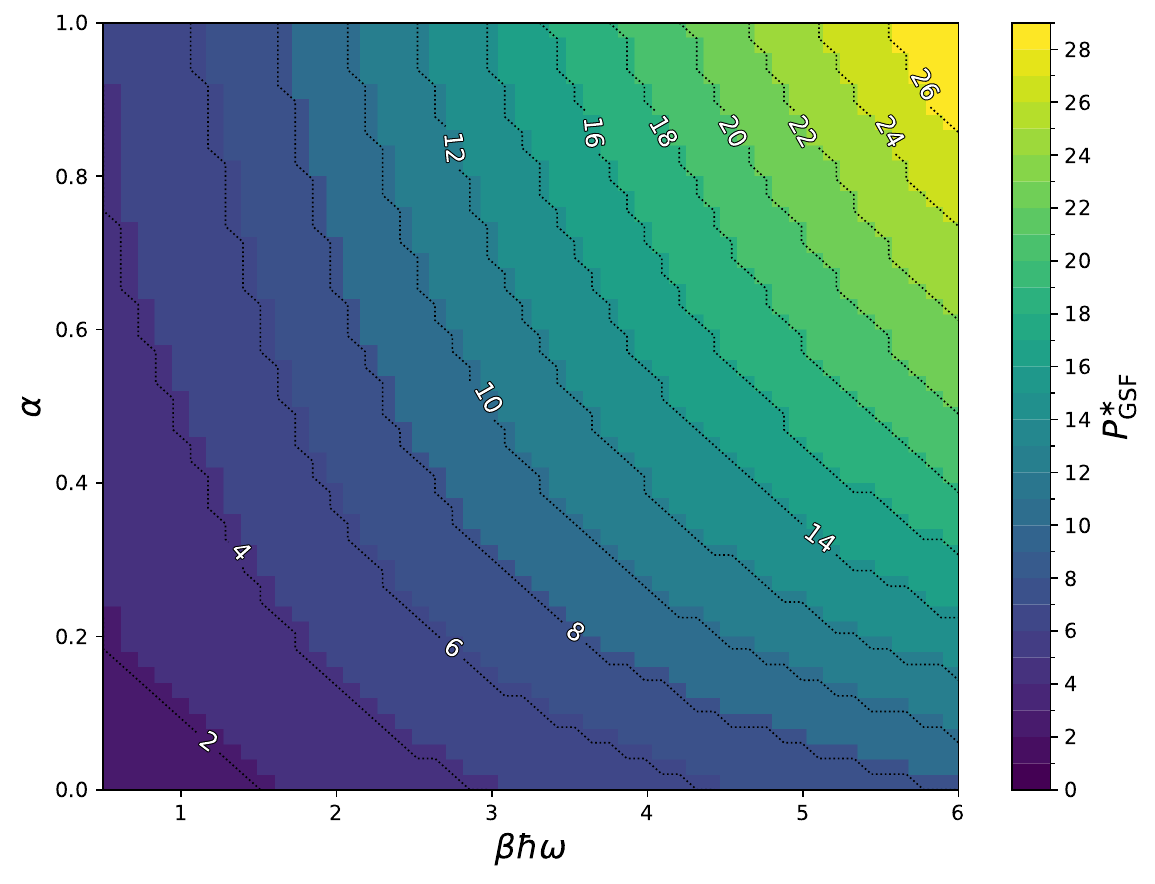}
  \caption{Minimal number of beads $P_{\mathrm{GSF}}^{\ast}$ required to converge the thermodynamic kinetic energy of $N=32$ harmonically-trapped bosons within a fixed tolerance of $0.5\%$, shown as a function of $\bhw$ and the GSF parameter $\alpha$. The color map represents $P_{\mathrm{GSF}}^{\ast}$, while contour lines indicate regions of equal computational cost. The plot illustrates how both the temperature and the choice of $\alpha$ affect the convergence rate of the GSF method.
  }
  \label{Fig:bosonic-harmonic-optimal-alpha-heatmap}
\end{figure}

\begin{figure}
\centering
  \includegraphics[width=1.0\linewidth]{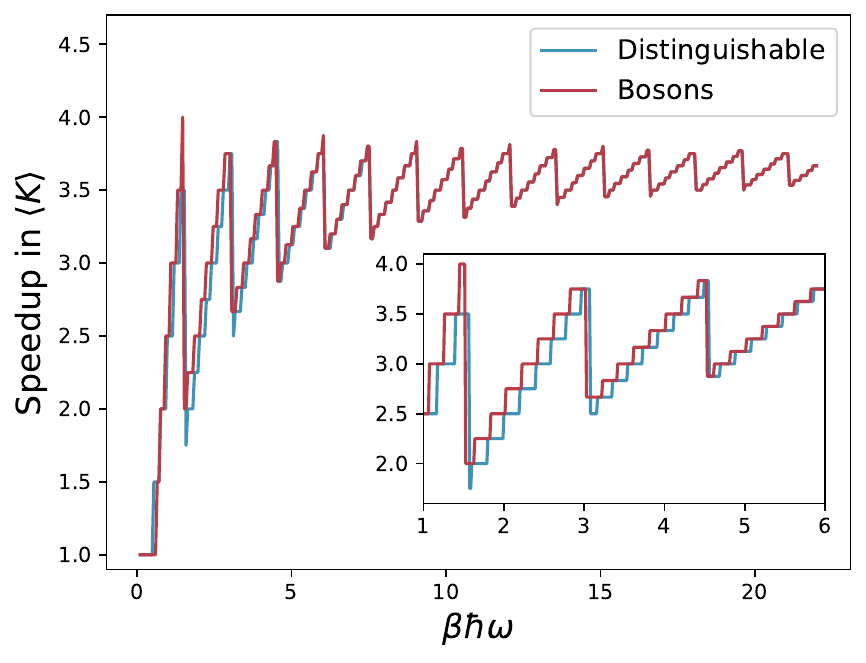}
  \caption{Exact speedup in the convergence of the thermodynamic kinetic energy, $\ev{K}$, as a function of $\bhw$ for $N=32$ harmonically-trapped distinguishable particles (blue) and bosons (red). Inset shows detail in the range $\bhw = 1 - 6$.
  }
  \label{Fig:analytical-harmonic-speedup-vs-bhw}
\end{figure}

\subsubsection{The exact speedup for particles in a harmonic trap}
Once the optimal $\alpha$ has been selected, we compare the exact convergence rate of GSF with that of PF according to the previously defined speedup criterion.~\Cref{Fig:analytical-harmonic-speedup-vs-bhw} shows the exact speedup gained by using the GSF when calculating the kinetic energy, both in the case of bosons and in distinguishable particles. Two notable features emerge. First, for this system and in this temperature regime, GSF is roughly three times faster than PF at the chosen tolerance of $0.5\%$.
Second, the curve displays small, ladder-like steps modulated by larger oscillations. The former are due to the changing $\MinTrotterPF$ as $\bhw$ is varied. By contrast, the speedup decreases whenever the denominator, $\MinTrotterGSF$, increases, which occurs as $\bhw$ increases. Since $\MinTrotterGSF$ remains constant over a significantly wider range of $\bhw$ than $\MinTrotterPF$, the step-like increases are more frequent than the drops. We observe only minor differences in speedup between bosons and distinguishable particles.

In the high-temperature (low $\beta$) limit, the system becomes classical, the PF converges at $P=1$, the GSF has no advantage, and the speedup is unity. %
As the temperature decreases, the speedup rises gradually and ultimately saturates around its maximal value. The weak dependence at large $\bhw$ can be understood by looking at the leading error terms in each factorization. Specifically, the GSF (PF) error is proportional to $\tau^4$ ($\tau^2$), such that when $\bhw$ is large,

\begin{equation}
\mathrm{Speedup}
=
\frac{
    \MinTrotterPF
}{
    \MinTrotterGSF
}
\propto
\frac{
    \beta \delta_{\mathrm{tol}}^{-1/2}
}{
    \beta \delta_{\mathrm{tol}}^{-1/4}
}
=
\delta_{\mathrm{tol}}^{-1/4},
\end{equation}
where $\delta_{\mathrm{tol}}$ is the tolerance. In other words, at high $\bhw$, the speedup should no longer be sensitive to temperature variations, as is the case in~\Cref{Fig:analytical-harmonic-speedup-vs-bhw}.

\subsection{PIMD results}
\label{Sec:results-numerical}
We used the re-weighting method as described in~\Cref{Sec:background-gsf,Sec:results-gsf-bosons-theory} to calculate the quantum energies of bosons and fermions in a harmonic trap, as well as bosons in a sinusoidal potential. Since these systems are non-interacting, exact energies are known~\cite{Borrmann1993,Schmidt2002} and can be compared to the PIMD results.~%
For the harmonic trap, we can also compare PIMD results with an exact benchmark for every $P$, as derived in~\Cref{Sec:exact-qho-discretized-partition-function}.
We analyze the performance of the PF and the GSF with increasing $P$. We then examine the speedup in convergence (as defined in~\Cref{Sec:optimal-alpha}) as a function of temperature to see in which regime the GSF is most beneficial. Following the results in~\Cref{Sec:exact-qho-discretized-partition-function} and~\citeref{Perez2011}, we use $\alpha = 0$ for the GSF in all our simulations.~%
Unless stated otherwise, all simulations use $10^7$ MD steps and averages over $10$-$30$ independent trajectories. If not shown, the statistical error is smaller than the symbol size. Additional computational details are available in~\refappendix{Sec:numerical}{A}.

To reduce the effect of bias and fluctuations, the convergence criterion %
and the subsequent speedup analysis are applied to a continuous fit to the data points. Since observables can be expanded in even powers of the imaginary-time slice~\cite{Fye1986,Suzuki1986,Chin2023_Oct,Chin2023_Dec} $
\tau \propto 1 / P
$, we generate a non-linear least squares fit~\cite{More1978} according to the function
\begin{equation}
\label{Eq:pf-fit-function}
E_{\mathrm{PF}} 
=
E_{\mathrm{exact}}
-
\frac{a_2}{P^2}
+
\frac{a_4}{P^4},
\end{equation}
for PF, and
\begin{equation}
\label{Eq:gsf-fit-function}
E_{\mathrm{GSF}} 
=
E_{\mathrm{exact}}
-
\frac{b_4}{P^4}
+
\frac{b_6}{P^6},
\end{equation}
for GSF. In~\Cref{Eq:pf-fit-function,Eq:gsf-fit-function}, ``$E$'' generically denotes either the total, kinetic or potential energy.

\begin{figure}
\centering
  \includegraphics[width=1.0\linewidth]{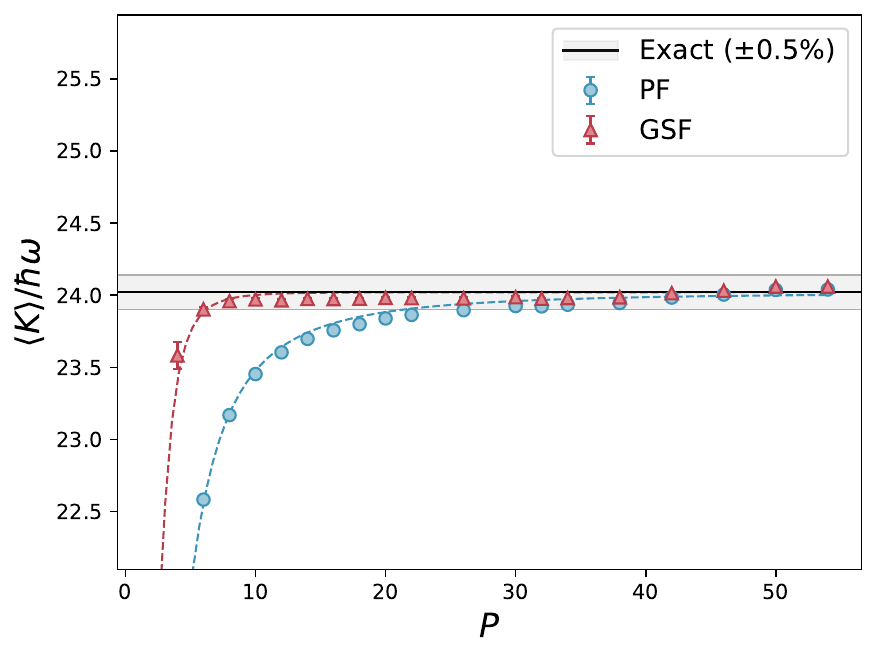}
  \caption{
  Mean kinetic energy (in oscillator units) of $N=32$ bosons in a harmonic trap at $\bhw = 4.33$, calculated using the thermodynamic energy estimator of~\Cref{Eq:gsf-bosonic-td-ke-estimator}, as a function of the number of beads $P$, using the PF (blue circles) and the GSF (red triangles). The dashed lines are fits according to~\Cref{Eq:pf-fit-function,Eq:gsf-fit-function}.
  }
  \label{Fig:bosonic-harmonic-td-prim-energy-convergence}
\end{figure}

In the case of the harmonic trap, we simulate a system of $N=32$ bosons with a trap frequency of $\hw = \meV{3.0}$. Measurements are performed for temperatures in the range $\bhw = 1 - 6$. For each temperature, we examine the behavior of the kinetic and potential energies as a function of the Trotter number $P$.~\Cref{Fig:bosonic-harmonic-td-prim-energy-convergence} depicts the kinetic energy in oscillator units as a function of $P$ for $\bhw = 4.33$. The kinetic energy is evaluated using the primitive estimator of~\Cref{Eq:gsf-bosonic-td-ke-estimator}. Based on our criterion, the PF scheme requires around $P=22$ beads for convergence, as opposed to just $P=8$ beads in the GSF.

In both cases, we see good agreement with the fit, suggesting the correctness of the aforementioned expansion. We find that including the first two terms of the expansion is sufficient %
for all but the highest temperature. In fact, the higher order term ($P^{-4}$ in the PF and $P^{-6}$ in the GSF) has a noticeable contribution only in the high-temperature regime, where convergence is attained at lower values of $P$.

\begin{figure}
\centering
  \includegraphics[width=1.0\linewidth]{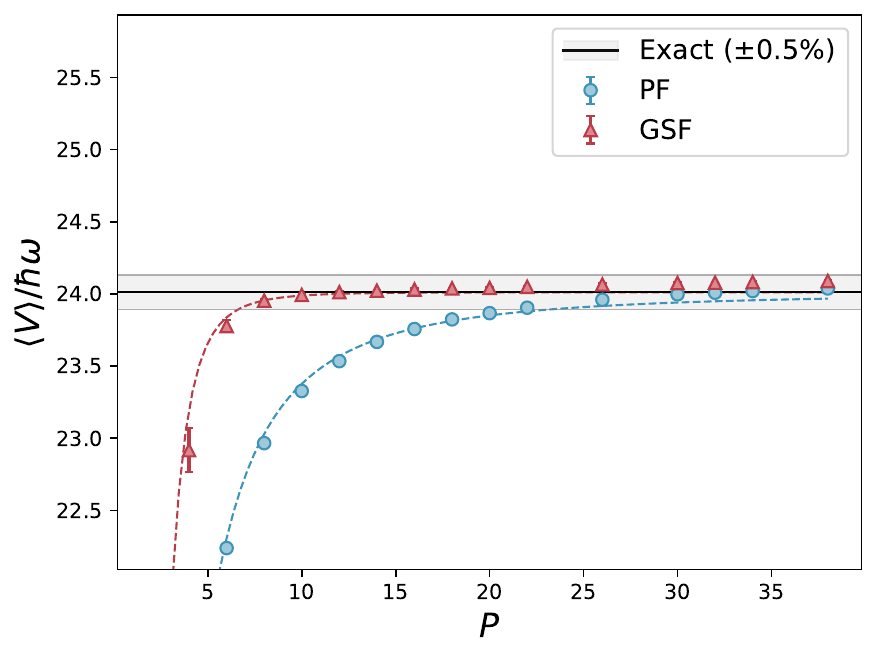}
  \caption{
  Mean potential energy (in oscillator units) of $N=32$ bosons in a harmonic trap at $\bhw = 4.89$, calculated using the potential energy estimator of~\Cref{Eq:gsf-bosonic-pe-estimator}, as a function of the number of beads $P$, using the PF (blue circles) and the GSF (red triangles). The dashed lines are fits according to~\Cref{Eq:pf-fit-function,Eq:gsf-fit-function}.
  }
  \label{Fig:bosonic-harmonic-potential-energy-convergence}
\end{figure}

\Cref{Fig:bosonic-harmonic-potential-energy-convergence} shows a similar analysis for $\bhw = 4.89$, this time examining the convergence of the potential energy, calculated using~\Cref{Eq:gsf-bosonic-pe-estimator}, as a function of $P$. %
Here too, we observe that GSF converges much faster than PF; the latter requires about $P=24$ beads while the former needs only $P=8$ beads. Likewise, we see that the rate of convergence is consistent with~\Cref{Eq:pf-fit-function,Eq:gsf-fit-function}. Both~\Cref{Fig:bosonic-harmonic-td-prim-energy-convergence,Fig:bosonic-harmonic-potential-energy-convergence} demonstrate that at low $P$, the statistical noise increases, which is a known byproduct %
of the re-weighting procedure~\cite{Ceriotti2011}. At the same time, we notice that the error bars quickly diminish in size as $P$ increases, such that the statistical inefficiencies associated with re-weighting have little impact on the overall convergence trend. %

\Cref{Fig:harmonic-kin-energy-convergence-speedup-dist-vs-bosons,Fig:harmonic-pot-energy-convergence-speedup-dist-vs-bosons} show how the speedup in the convergence of $\ev{K}$ and $\ev{V}$, respectively, varies across ten different temperatures in the range $\bhw = 1 - 6$, for both bosons and distinguishable particles. The results are compared with the exact benchmark of~\Cref{Sec:exact-qho-discretized-partition-function}. We find good agreement with the theory, both for the kinetic and potential energy. The relatively minor discrepancies can be attributed to the imperfections of the fit, which can skew the convergence trend and affect the values of $\MinTrotterPF$ and $\MinTrotterGSF$.

\begin{figure}
\centering
  \includegraphics[width=1.0\linewidth]{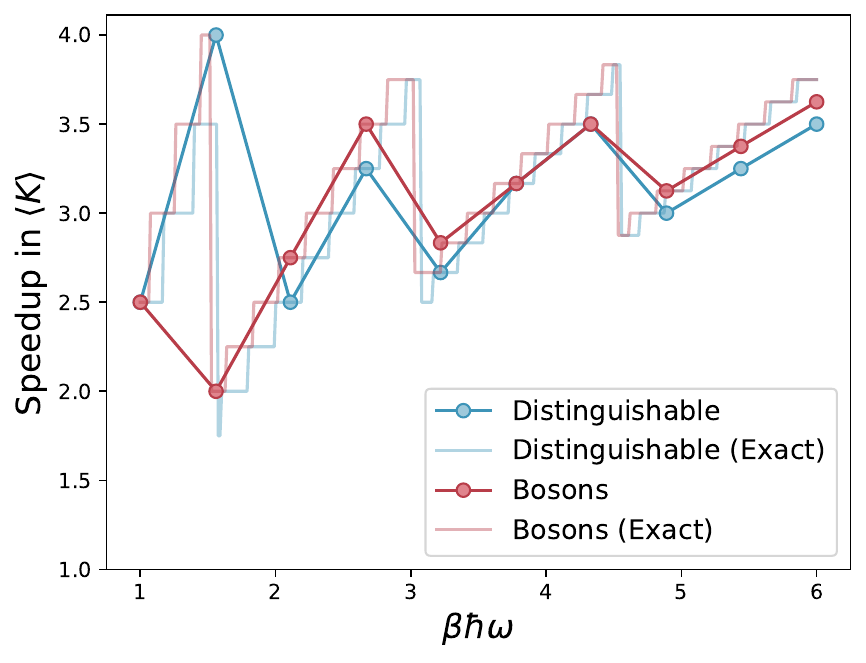}
  \caption{Speedup in the convergence of the thermodynamic kinetic energy estimator, $\ev{K}$, as a function of $\bhw$ for $N=32$ bosons in a harmonic trap with $\hw = \meV{3.0}$. The speedup is determined from the fit to the $\ev{K}$ vs. $P$ plot, for PF and GSF, respectively. The faint solid lines represent the exact speedup calculated from~\Cref{Eq:discretized-qho-z-gsf-dist,Eq:harmonic-finite-trotter-partition-function-recurrence}.
  }
  \label{Fig:harmonic-kin-energy-convergence-speedup-dist-vs-bosons}
\end{figure}

\begin{figure}
\centering
  \includegraphics[width=1.0\linewidth]{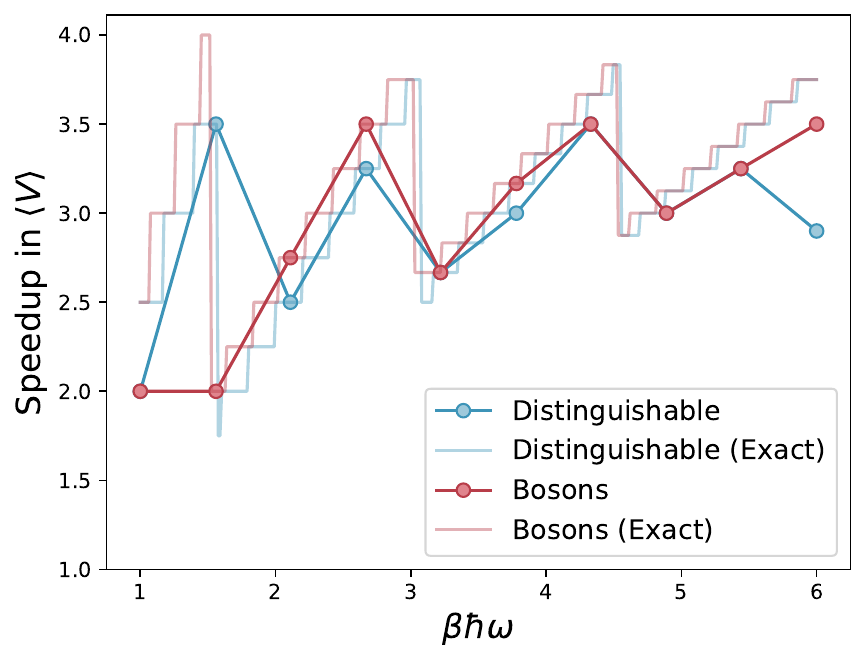}
  \caption{Speedup in the convergence of $\ev{V}$ as a function of $\bhw$ for $N=32$ bosons in a harmonic trap with $\hw = \meV{3.0}$. The speedup is determined from the fit to the $\ev{V}$ vs. $P$ plot, for PF and GSF, respectively. The faint solid lines represent the exact speedup calculated from~\Cref{Eq:discretized-qho-z-gsf-dist,Eq:harmonic-finite-trotter-partition-function-recurrence}.
  }
  \label{Fig:harmonic-pot-energy-convergence-speedup-dist-vs-bosons}
\end{figure}

To test the correctness of the GSF re-weighting in the fermionic case, we simulate $N=4$ fermions in a harmonic trap, with the same frequency of $\hw = \meV{3.0}$ as before. We choose the temperature $\bhw = 1.5$, where the sign problem is not too severe, and where the fermionic energy is approximately $1.47$ times larger than that of bosons. %
To reduce the variance stemming from the fermionic sign problem, we increase the number of MD steps to $10^8$ and average over $50$ independent trajectories.~\Cref{Fig:fermionic-harmonic-kinetic-virial-energy-convergence} shows the kinetic energy as a function of $P$, calculated using~\Cref{Eq:gsf-bosonic-virial-estimator} and the re-weighting protocol of~\Cref{Eq:gsf-fermionic-expectation}. For the PF, we used the standard sign re-weighting, i.e., $\EnsAvg{K}{PF, F} = \EnsAvg{K s}{PF, B} / \EnsAvg{s}{PF, B}$. Here too, we see that the GSF result converges much faster, requiring only $P=2$ as opposed to $P=7$ of the PF, implying a speedup of $3.5$. 

For this temperature, the average sign is $\EnsAvg{s}{PF,B} \approx \uncertainty{0.0569}{2}$. 
Encouragingly, no discernible increase in error bars is observed when using GSF compared to PF.
According to~\Cref{Fig:fermionic-harmonic-kinetic-virial-energy-convergence}, the total weight associated with the fermionic GSF scheme, $\EnsAvg{s \gsfweightsym}{PF, B}$, is comparable to $\EnsAvg{s}{PF, B}$ over the entire range of Trotter numbers. Indeed, the lowest value of $\EnsAvg{\gsfweightsym}{PF, F}=\EnsAvg{s \gsfweightsym}{PF, B}/\EnsAvg{s}{PF, B}$ is attained at $P=2$, where it is equal to $\uncertainty{0.6908}{4}$, meaning the effective weight of the combined method decreases by at most $31\%$ compared to the primitive fermionic re-weighting. This suggests that GSF can be safely used whenever the fermionic sign is tolerable.

\begin{figure}
\centering
  \includegraphics[width=1.0\linewidth]{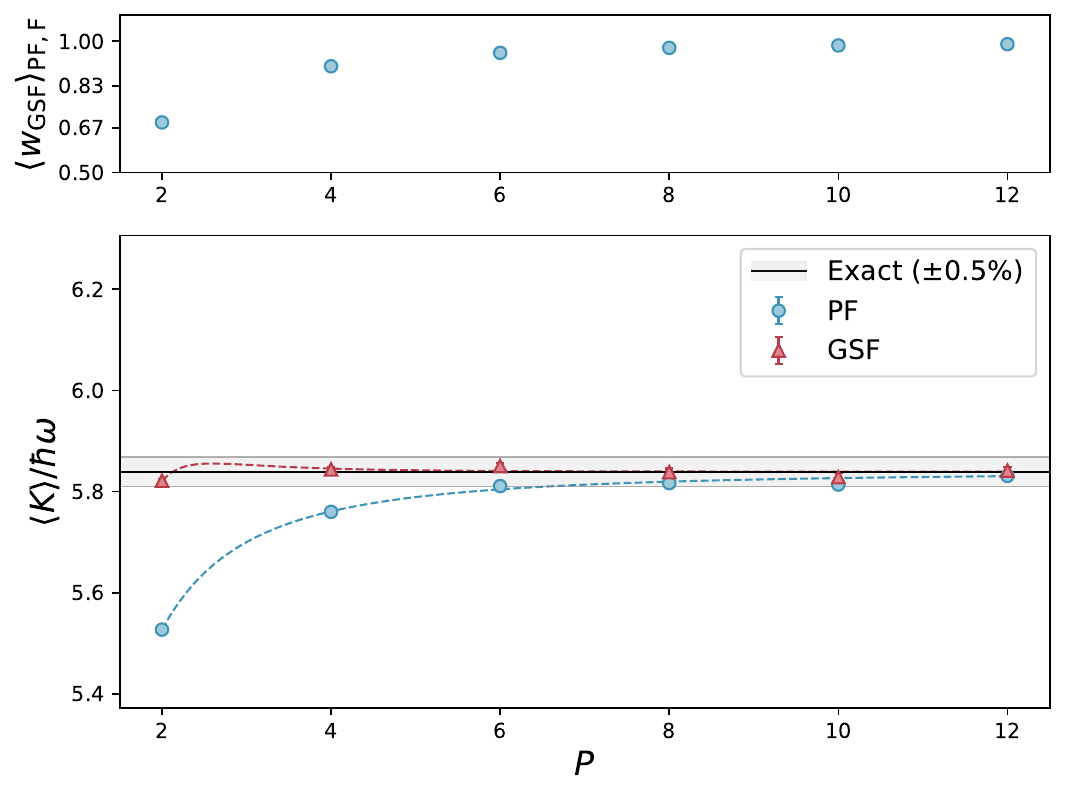}
  \caption{
  \textbf{Bottom panel}: Mean kinetic energy (in oscillator units) of $N=4$ fermions in a harmonic trap at $\bhw = 1.5$, calculated using the virial energy estimator of~\Cref{Eq:gsf-bosonic-virial-estimator}, as a function of the number of beads $P$, using the PF (blue circles) and the GSF (red triangles). The dashed lines are fits according to~\Cref{Eq:pf-fit-function,Eq:gsf-fit-function}. \textbf{Top panel}: Sign-weighted $\gsfweightsym$, defined as $\EnsAvg{\gsfweightsym}{PF, F} = \EnsAvg{s \gsfweightsym}{PF, B} / \EnsAvg{s}{PF, B}$.
  }
  \label{Fig:fermionic-harmonic-kinetic-virial-energy-convergence}
\end{figure}

For the sinusoidal trap, we simulate $N=64$ bosons in the external potential $\func{V}{x,y,z} = V_0 \left(\cos \left(\frac{2\pi}{L} x\right) + \cos \left(\frac{2\pi}{L} y\right) + \cos \left(\frac{2\pi}{L} z\right) \right)$, where $V_0 = \meV{0.3}$ is the oscillation amplitude and $L = \angstrom{12.23}$ is the linear size of the cubic cell. The latter corresponds to a fixed number density of $\density{0.035}$. We impose periodic boundary conditions and apply minimum-image convention to the springs~\cite{Higer2025}. %
Simulations are performed at five temperatures in the range $T=\temperature{0.6-2.5}$, where the mean energy of bosons differs markedly from that of distinguishable particles.~\Cref{Fig:bosonic-cosine-td-prim-energy-convergence} shows the mean energy per atom per trap oscillation amplitude at $T=\temperature{0.6}$. Periodic boundary conditions render the virial estimator invalid, so the energy is calculated by summing~\Cref{Eq:gsf-bosonic-td-ke-estimator} and~\Cref{Eq:gsf-bosonic-pe-estimator}. We observe a convergence trend, this time from above, implying that $a_2$ and $b_4$ from~\Cref{Eq:pf-fit-function,Eq:gsf-fit-function}, respectively, are both negative. With GSF, convergence is achieved already at $P=6$, compared to $P=21$ with PF.%

\begin{figure}
\centering
  \includegraphics[width=1.0\linewidth]{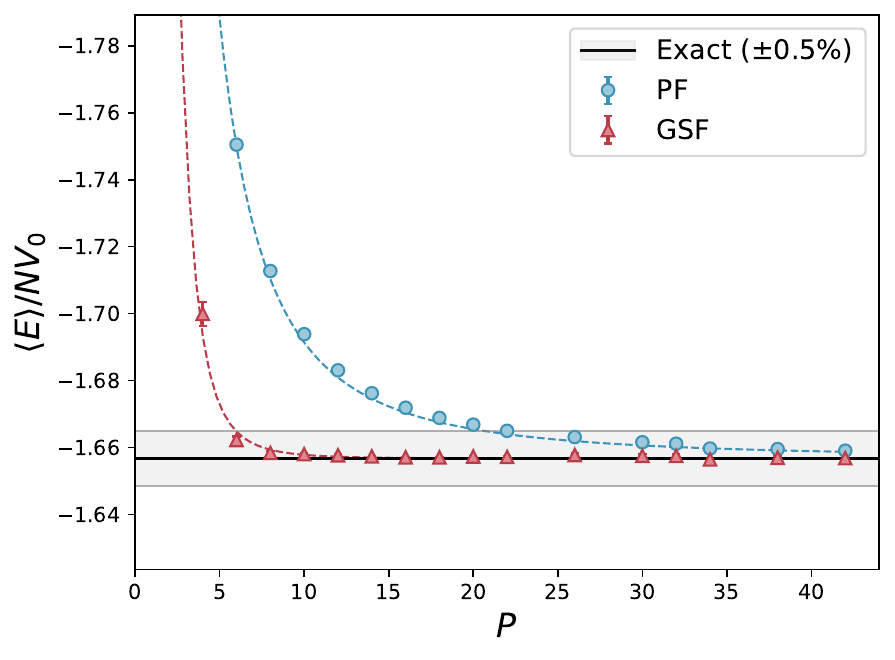}
  \caption{
  Internal energy $\ev{E}/NV_0$ (per trap oscillation amplitude per number of bosons) of $N=64$ bosons in a sinusoidal trap at $T=\temperature{0.6}$, as a function of the number of beads $P$, using the PF (blue circles) and the GSF (red triangles). The dashed lines are fits according to~\Cref{Eq:pf-fit-function,Eq:gsf-fit-function}.
  }
  \label{Fig:bosonic-cosine-td-prim-energy-convergence}
\end{figure}

\Cref{Fig:cosine-energy-convergence-speedup-dist-vs-bosons} depicts the speedup in convergence as a function of temperature, for $T=\temperature{0.6, 1.0, 1.5, 2.0, 2.5}$. We again observe a threefold speedup on average, with variations comparable to the harmonic system and relatively minor differences between bosons and distinguishable particles.

\begin{figure}
\centering
  \includegraphics[width=1.0\linewidth]{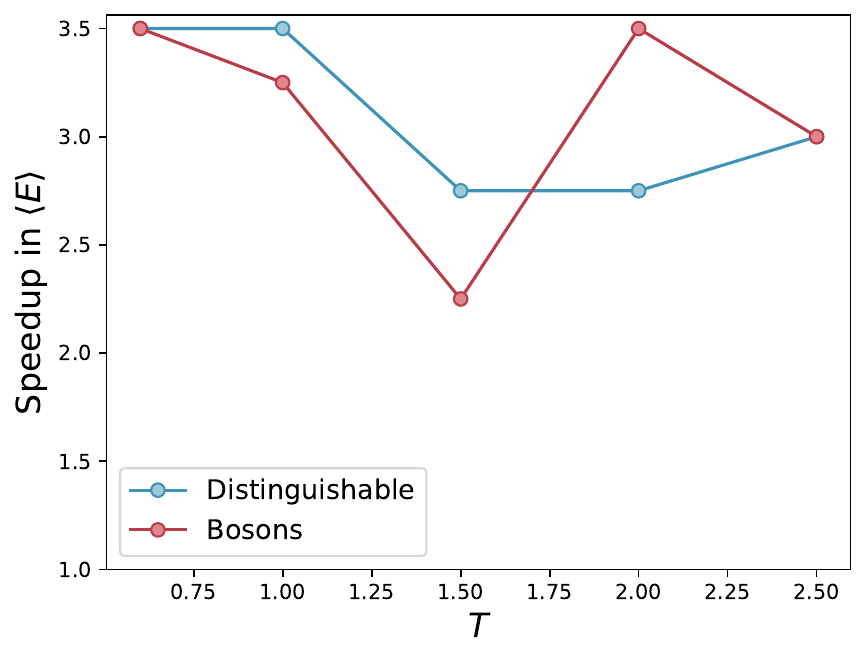}
  \caption{Speedup in the convergence of the thermodynamic energy estimator, $\ev{E}$, as a function of $T$ for $N=64$ bosons in a sinusoidal trap with $A = \meV{0.3}$. The speedup is determined from the fit to the $\ev{E}$ vs. $P$ plot, for PF and GSF, respectively.}
  \label{Fig:cosine-energy-convergence-speedup-dist-vs-bosons}
\end{figure}

\section{Conclusions}
In this paper, we implemented the Suzuki-Chin fourth-order factorization scheme within the framework of the bosonic and fermionic PIMD, without any modifications to the quadratic scaling algorithm. Our argument relied on the fact that, unlike the ring-polymer exchange effects, which are encoded in the kinetic terms, the GSF affects only the potential terms of the Boltzmann operators, and, importantly, keeps them invariant under particle permutations. 

To verify the correctness of our method, we tested it for a system of non-interacting bosons in a harmonic trap and in a sinusoidal potential. %
For identical particles in a harmonic trap, we derived a recurrence relation for the discretized partition function, which enabled an exact comparison of the PF and GSF methods at every $P$, as well as validating the PIMD simulations against these results.

In both cases, GSF significantly reduces the amount of beads necessary for convergence of the energy. Furthermore, using a smaller, harmonically-trapped system, we found that combining GSF re-weighting with sign re-weighting makes it possible to speed up convergence in fermionic systems as well. %
In particular, we find that the weight obtained by combining GSF with fermionic re-weighting is comparable to the fermionic sign, explaining why the GSF does not worsen the sign problem.
Despite the inefficiency inherent to the re-weighting procedure at small Trotter numbers, we found that this did not affect the results in any meaningful way for the systems in question. To understand where the GSF is most beneficial, we calculated the convergence speedup of the different observables as a function of temperature. For both systems we observed a gradual increase in the speedup as the temperature was decreased, suggesting the GSF scheme is most useful in cases where a high number of beads is required for convergence. We also found little difference between the speedup in bosons and distinguishable particles. Furthermore, the results for the sinusoidal field show that no additional complications arise when periodic boundary conditions are introduced within the GSF scheme, making this approach particularly attractive for condensed phase systems.

\section*{Conflicts of interest}

There are no conflicts to declare.

\section*{Data availability}

The data that support the findings of this study, as well as details about the code, are openly available on GitHub, at \url{https://github.com/Hirshberg-Lab/bosonic-gsf-pimd-data}.

\section*{Acknowledgements}
B.H. acknowledges support by the Israel Science Foundation (grants No.\ 1037/22 and 1312/22). The authors thank Yotam M. Y. Feldman for useful comments on the manuscript. 

\bibliography{rsc}
\bibliographystyle{rsc}

\ifincludesi
\clearpage
\onecolumngrid
\section*{Supplementary Information}
\appendix
\makeatletter
\@ifundefined{ifmain}{\newif\ifmain\mainfalse}{}
\makeatother

\ifmain
\else

\documentclass[%
reprint,
superscriptaddress,
amsmath,amssymb,
aps,
prx,
]{revtex4-2}

\usepackage{graphicx}
\usepackage{dcolumn}
\usepackage{bm}

\usepackage[version=3]{mhchem}
\usepackage{xcolor}
\usepackage{amsmath}
\usepackage{amsthm}
\usepackage{physics}
\usepackage[caption=false]{subfig}
\usepackage[ruled,vlined,linesnumbered]{algorithm2e}
\providecommand{\SetAlgoLined}{\SetLine}
\providecommand{\DontPrintSemicolon}{\dontprintsemicolon}
\usepackage{tikz}
\usepackage{booktabs}
\usepackage{xcolor}
\usepackage[export]{adjustbox}
\usepackage{paralist}
\usepackage{hyperref}
\usepackage{cleveref}
\usepackage{float}
\usepackage{diagbox}

\newcommand{\rememberlines}{\xdef\rememberedlines{\number\value{AlgoLine}}}
\newcommand{\resumenumbering}{\setcounter{AlgoLine}{\rememberedlines}}

\makeatletter
\newcommand{\alghidebottomrule}{\renewcommand{\@algocf@post@ruled}}

\newcommand{\alghidetoprule}{\renewcommand{\@algocf@pre@ruled}}
\makeatother

\newtheorem{theorem}{Theorem}
\newtheorem{lemma}{Lemma}
\newtheorem{definition}{Definition}
\newtheorem{corollary}{Corollary}[theorem]

\DeclareMathOperator{\sgn}{sgn}

\begin{document}

\onecolumngrid

\title{Supporting Information: \\
Generalized Suzuki-Chin Factorization in Bosonic Path Integral Molecular Dynamics}

\author{Jacob Higer}
\affiliation {School of Physics, Tel Aviv University, Tel Aviv 6997801, Israel.}
\author{Barak Hirshberg}%
\email{hirshb@tauex.tau.ac.il}%
\affiliation{School of Chemistry, Tel Aviv University, Tel Aviv 6997801, Israel.}%
\affiliation{The Ratner Center for Single Molecule Science, Tel Aviv University, Tel Aviv 6997801, Israel.}
\affiliation{The Center for Computational Molecular and Materials Science, Tel Aviv University, Tel Aviv 6997801, Israel.}%

\maketitle

\onecolumngrid

\fi

\newcommand{\mainref}[2]{%
    \ifmaintextcompiled
        \Cref{#1}%
    \else
        {#2} in the main text%
    \fi
}

\newcommand{\mainrefplain}[2]{%
    \ifmaintextcompiled
        \Cref{#1}%
    \else
        {#2}%
    \fi
}

\newcommand{\maineqrefplain}[2]{\mainrefplain{#1}{Equation (#2)}}

\newcommand{\mainfigref}[2]{\mainref{#1}{Figure #2}}
\newcommand{\maineqref}[2]{\mainref{#1}{Equation (#2)}}
\newcommand{\mainsecref}[2]{\mainref{#1}{Section #2}}
\newcommand{\maineqrefs}[2]{\mainref{#1}{Equations #2}}

\section{Numerical details}
\label{Sec:numerical}

In this section, we provide additional details pertaining to the PIMD simulations.

\subsection{General computational details}

\paragraph{Simulation setup}
All simulations are performed in a three-dimensional cubic geometry with particle mass $\massunit{4.0}$. For harmonically-trapped bosons, we use a simulation cell with linear size of $\angstrom{50}$ and an isotropic harmonic confinement characterized by $\hw = \meV{3.0}$. 

For bosons in the sinusoidal trap, simulations are carried out in a cubic box with side length $L = \angstrom{12.23}$, corresponding to a number density of $\density{0.035}$. The external potential is given by 
\begin{equation}
\func{V}{x,y,z} = V_0 \left(\cos \left(\frac{2\pi}{L} x\right) + \cos \left(\frac{2\pi}{L} y\right) + \cos \left(\frac{2\pi}{L} z\right) \right), 
\end{equation}
with $L$ equal to the box side length and the field amplitude $V_0 = \unit{0.3}{meV}$. We enforce periodic boundary conditions by applying the minimum-image convention to distances between replicas, and by wrapping the bead coordinates~\cite{Higer2025}.

\paragraph{Molecular dynamics protocol} 
Initial bead positions are initialized on a grid (except for the harmonically-trapped bosons, which are sampled uniformly within the simulation cell), while velocities are drawn from a Maxwell-Boltzmann distribution without removing the center of mass motion. Canonical sampling is achieved by applying a Langevin thermostat to the Cartesian degrees of freedom of each bead. The friction coefficient of the thermostat is set to $\left( 100 \Delta t \right)^{-1}$, where $\Delta t$ is the molecular dynamics time step.

\paragraph{Sampling and equilibration}
Observables such as energy are recorded every $100$ MD steps. The time step $\Delta t$ is calibrated using the largest Trotter number $P_{\mathrm{max}}$ for a given temperature, with smaller $P$ requiring proportionally larger time steps. To reduce bias from equilibration, the first $30\%$ of recorded samples are discarded prior to computing ensemble averages.

\subsection{Specific PIMD details}
For harmonically-trapped bosons, we ran simulations at $10$ different temperatures in the range $\bhw = 1 - 6$. For each temperature, we calculated the minimal number of beads required for convergence (with a tolerance of $0.5\%$) of the thermodynamic kinetic energy estimator $\ev{K}$ and the potential energy estimator $\ev{V}$, in both the PF and the GSF. These are denoted as $\func{P^{\ast}_{\mathrm{PF/GSF}}}{O}$, where $O$ is the estimator.

\begin{table}[H]
\centering
\begin{tabular}{ccccccccc}
\toprule 
$\bhw$ 
& 
Time step $\left[\text{fs}\right]$ 
&
$P_{\mathrm{max}}$
& 
Trajectories 
& 
Steps $\left[10^{7}\right]$ 
& 
$\func{P^{\ast}_{\mathrm{PF}}}{K}$ 
& 
$\func{P^{\ast}_{\mathrm{GSF}}}{K}$
& 
$\func{P^{\ast}_{\mathrm{PF}}}{V}$ 
& 
$\func{P^{\ast}_{\mathrm{GSF}}}{V}$
\tabularnewline
\midrule
\midrule 
$1.00$ & $0.020$ & $12$ & $10$ & $1.0$ & $5$ & $2$ & $4$ & $2$
\tabularnewline
\midrule 
$1.56$ & $0.250$ & $10$ & $30$ & $1.0$ 
& $8$ & $4$ & $8$ & $4$
\tabularnewline
\midrule 
$2.11$ & $0.125$ & $28$ & $10$ & $1.0$ 
& $11$ & $4$ & $11$ & $4$
\tabularnewline
\midrule 
$2.67$ & $0.125$ & $34$ & $10$ & $1.0$ 
& $14$ & $4$ & $14$ & $4$
\tabularnewline
\midrule 
$3.22$ & $0.125$ & $42$ & $10$ & $1.0$ 
& $17$ & $6$ & $16$ & $6$
\tabularnewline
\midrule 
$3.78$ & $0.125$ & $50$ & $10$ & $1.0$ 
& $19$ & $6$ & $19$ & $6$
\tabularnewline
\midrule 
$4.33$ & $0.220$ & $32$ & $30$ & $1.0$ 
& $21$ & $6$ & $21$ & $6$
\tabularnewline
\midrule 
$4.89$ & $0.125$ & $64$ & $10$ & $1.0$ 
& $25$ & $8$ & $24$ & $8$
\tabularnewline
\midrule 
$5.44$ & $0.125$ & $70$ & $10$ & $1.0$ 
& $27$ & $8$ & $26$ & $8$
\tabularnewline
\midrule 
$6.00$ & $0.125$ & $78$ & $10$ & $1.0$ 
& $29$ & $8$ & $28$ & $8$
\tabularnewline
\bottomrule
\end{tabular}
\caption{Simulation parameters used for $N=32$ bosons in a harmonic trap, with the corresponding Trotter numbers required for convergence in each factorization.}
\end{table}

For harmonically-trapped fermions, we simulated $N=4$ particles at temperature $\bhw = 1.5$, in the same simulation box as the previous bosonic system. We used a time step of $\dt{1.6}$ for $P_{\mathrm{max}} = 12$, and averaged over $50$ independent trajectories, each spanning $\unitfmt{160}{\mathrm{ns}}$. In the following table, we summarize the results for the sign and GSF weights, as a function of the Trotter number $P$:

\begin{table}[H]
\centering
\begin{tabular}{ccccccc}
\toprule 
\diagbox[width=10em]{
    Weight
}{
    $P$
}
& 
$2$
&
$4$
& 
$6$
& 
$8$
& 
$10$
& 
$12$
\tabularnewline
\midrule
\midrule 
$\EnsAvg{s}{PF,B}$ 
& $\uncertainty{0.0640}{1}$
& $\uncertainty{0.0585}{1}$
& $\uncertainty{0.0574}{2}$
& $\uncertainty{0.0569}{2}$
& $\uncertainty{0.0573}{3}$
& $\uncertainty{0.0571}{2}$
\tabularnewline
\midrule 
$\EnsAvg{sw}{PF,B} / \EnsAvg{s}{PF,B}$
& $\uncertainty{0.6908}{4}$
& $\uncertainty{0.9047}{4}$
& $\uncertainty{0.9557}{3}$
& $\uncertainty{0.9749}{3}$
& $\uncertainty{0.9839}{2}$
& $\uncertainty{0.9887}{2}$
\tabularnewline
\midrule 
$\EnsAvg{sw}{PF,B} / \EnsAvg{w}{PF,B}$ 
& $\uncertainty{0.0570}{1}$
& $\uncertainty{0.0567}{1}$
& $\uncertainty{0.0566}{2}$
& $\uncertainty{0.0565}{2}$
& $\uncertainty{0.0570}{3}$
& $\uncertainty{0.0568}{2}$
\tabularnewline
\bottomrule
\end{tabular}
\caption{Sign and GSF weights for $N=4$ fermions at $\bhw = 1.5$.}
\end{table}

For the sinusoidal trap, we ran simulations at $5$ different temperatures in the range $T=\temperature{0.6 - 2.5}$. Like in the harmonic case, we summarize both the simulation parameters and the corresponding Trotter numbers required for convergence of the thermodynamic energy estimator $\ev{E}$.

\begin{table}[H]
\centering
\begin{tabular}{ccccccc}
\toprule 
Temperature $\left[\text{K}\right]$
& 
Time step $\left[\text{fs}\right]$ 
&
$P_{\mathrm{max}}$
& 
Trajectories 
& 
Steps $\left[10^{7}\right]$ 
& 
$\func{P^{\ast}_{\mathrm{PF}}}{E}$ 
& 
$\func{P^{\ast}_{\mathrm{GSF}}}{E}$
\tabularnewline
\midrule
\midrule 
$0.60$ & $6.0$ & $42$ & $10$ & $1.0$ 
& $48$ & $10$
\tabularnewline
\midrule 
$1.00$ & $5.0$ & $32$ & $10$ & $1.0$ 
& $30$ & $6$
\tabularnewline
\midrule 
$1.50$ & $3.6$ & $28$ & $10$ & $1.0$ 
& $20$ & $4$
\tabularnewline
\midrule 
$2.00$ & $3.6$ & $20$ & $10$ & $1.0$ 
& $16$ & $4$
\tabularnewline
\midrule 
$2.50$ & $3.5$ & $16$ & $10$ & $1.0$ 
& $14$ & $4$
\tabularnewline
\bottomrule
\end{tabular}
\caption{Simulation parameters used for $N=64$ bosons in a sinusoidal trap, with the corresponding Trotter numbers required for convergence in each factorization.}
\end{table}

\subsection{Exact energies of indistinguishable non-interacting systems}

In this work, we compare our PIMD results to exact values for $P \to \infty$. The latter are calculated recursively using~\cite{Krauth2006}
\begin{equation}
\label{eq:noninteracting-energy-recurrence}
\ev{E} 
= 
-\frac{1}{N Z_N} 
\sum_{k=1}^{N}
\left(
    \pm 1
\right)^{k - 1}
\left(
    \pdv{z_k}{\beta} Z_{N-k} + 
    z_k \pdv{Z_{N-k}}{\beta}
\right),
\end{equation}
where $Z_N=\sum_{k=1}^{N} \left(\pm 1\right)^{k-1} z_k Z_{N-k}$ is the partition function of $N$ identical particles (with $+1$ corresponding to bosons and $-1$ to fermions), and $z_k$ is the single-particle partition function at inverse temperature $k\beta$. In the harmonic case, the single-particle quantities have simple closed-form expressions. In the case of the sinusoidal potential, both $z_k$ and $\pdv{z_k}{\beta}$ are evaluated by summing over the first $25$ eigenvalues. Details of the procedure are described in the SI of Ref.~\cite{Higer2025}.

\section{Analytical expressions for harmonic $Z_P$}
\label{Sec:proofs}

In this section, we derive finite-$P$ partition functions for the harmonic potential, in both the PF and the GSF. In~\Cref{Thm:pf-partition-function-as-recursive-integral} we prove the expression for the distinguishable-particle partition function in the PF case. In~\Cref{Cor:pf-dist-discrete-partition-function-same-as-takahashi} we show how this expression coincides with known results from literature, particularly that of~\citeauthor{TakahashiImada1984}. We then generalize these results to the GSF case in~\Cref{Thm:gsf-partition-function-as-recursive-integral}. Finally, in~\Cref{Thm:bosonic-pf-partition-function-as-recursive-integral}, we derive a recurrence relation for the case of indistinguishable particles, valid both for the PF and the GSF.

\begin{lemma}
\label{Lem:gaussian-integral}
Let $a, k > 0$. Then
\begin{align*}
\mathcal{I} =
&
\int \dd{\pos}
\exp{
    - \frac{a}{C}
    \left[
        A \mathbf{x}^2
        - 2 \mathbf{x} \cdot \pos
        + B \pos^2
    \right]
}
\exp{
    - a \left[
        \left(
            \pos
            -
            \mathbf{y}
        \right)^2
        + k \pos^2
    \right]
}
\\
&=
\left(
    \frac{
        \pi
    }{
        \frac{a}{C}
        \left(
            B + C\left(k + 1\right)
        \right)
    }
\right)^{\frac{d}{2}}
\exp[
    -\frac{a A}{C} \mathbf{x}^2
    - a \mathbf{y}^2
    + \frac{a C}{B + C\left(k + 1\right)}
    \left(
        \frac{\mathbf{x}}{C}
        + \mathbf{y}
    \right)^2
].
\end{align*}
\end{lemma}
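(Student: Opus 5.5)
This is a $d$-dimensional Gaussian integral, so the plan is to complete the square in $\pos$. Throughout, I assume $A,B,C$ are scalars with $C>0$ and $B + C(k+1) > 0$ (as will be the case for the positive polynomials produced by the recursion), so that the integral converges.

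First, collect all terms in the combined exponent that involve $\pos$. The exponent is
\begin{equation*}
-\frac{aA}{C}\mathbf{x}^2 - a\mathbf{y}^2 + \frac{2a}{C}\mathbf{x}\cdot\pos + 2a\,\mathbf{y}\cdot\pos - \left[\frac{aB}{C} + a(1+k)\right]\pos^2 .
\end{equation*}
With $\lambda = \frac{a}{C}\left(B + C(k+1)\right)$ and $\mathbf{b} = 2a\left(\frac{\mathbf{x}}{C} + \mathbf{y}\right)$, the $\pos$-dependent part is $-\lambda \pos^2 + \mathbf{b}\cdot\pos$. The terms $-\frac{aA}{C}\mathbf{x}^2 - a\mathbf{y}^2$ do not depend on $\pos$ and factor out of the integral unchanged.

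Next, complete the square:
\begin{equation*}
-\lambda\pos^2 + \mathbf{b}\cdot\pos = -\lambda\left(\pos - \frac{\mathbf{b}}{2\lambda}\right)^2 + \frac{\mathbf{b}^2}{4\lambda}.
\end{equation*}
The integral then factorizes over the $d$ Cartesian components. Each component contributes $\sqrt{\pi/\lambda}$ after the shift, which gives the prefactor $(\pi/\lambda)^{d/2}$ in the statement.

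Finally, simplify the leftover constant:
\begin{equation*}
\frac{\mathbf{b}^2}{4\lambda} = \frac{4a^2\left(\frac{\mathbf{x}}{C} + \mathbf{y}\right)^2}{\frac{4a}{C}\left(B + C(k+1)\right)} = \frac{aC}{B + C(k+1)}\left(\frac{\mathbf{x}}{C} + \mathbf{y}\right)^2 ,
\end{equation*}
which is exactly the last term in the stated exponent.

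No step is genuinely difficult. The main care points are keeping the factor of 2 in the cross terms correct, which fixes $\mathbf{b}$, and stating the convergence condition $\lambda > 0$ explicitly. That condition holds for $a,k>0$ together with the assumed positivity of $B$ and $C$.
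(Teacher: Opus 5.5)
Your proof is correct and follows essentially the same route as the paper. Both separate the $\pos$-independent terms $-\frac{aA}{C}\mathbf{x}^2 - a\mathbf{y}^2$, complete the square in $\pos$ with curvature $\frac{a}{C}\left(B + C(k+1)\right)$, and apply the standard $d$-dimensional Gaussian integral. Your explicit convergence condition $\lambda>0$ (for instance $C>0$ and $B+C(k+1)>0$) is a worthwhile addition, since the paper's stated hypotheses $a,k>0$ leave it implicit.
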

\begin{proof}[Proof of~\Cref{Lem:gaussian-integral}]
Note that the integrand can be re-written as
\begin{align*}
&
\exp{
    - \frac{a}{C}
    \left[
        A \mathbf{x}^2
        - 2 \mathbf{x} \cdot \pos
        + B \pos^2
    \right]
    - a \left[
        \left(
            \pos
            -
            \mathbf{y}
        \right)^2
        + k \pos^2
    \right]
}
\\
=
&
\exp{
    - \frac{a}{C}
    \left[
        A \mathbf{x}^2
        - 2 \mathbf{x} \cdot \pos
        + B \pos^2
        +
        C
        \left(
            \left(k + 1\right) \pos^2
            - 2 \pos \cdot \mathbf{y}
            + \mathbf{y}^2
        \right)
    \right]
}
\\
=
&
\exp{
    - \frac{a}{C}
    \left(
        A \mathbf{x}^2
        +
        C \mathbf{y}^2
    \right)
}
\exp{
    - \frac{a}{C}
    \left[
        \left(
            B + C \left(k + 1\right)        
        \right)
        \pos^2
        - 2 \left(
            \mathbf{x} + C \mathbf{y}
        \right)
        \cdot
        \pos
    \right]
}.
\end{align*}
The first term is constant with respect to the integral. In the second term, we can complete the square to get
\begin{align*}
&
\exp{
    - \frac{a}{C}
    \left[
        \left(
            B + C \left(k + 1\right)        
        \right)
        \pos^2
        - 2 \left(
            \mathbf{x} + C \mathbf{y}
        \right)
        \cdot
        \pos
    \right]
}
\\
=
&
\exp{
    -\frac{a}{C} 
    \left(
        B + C\left(k + 1\right)
    \right)
    \left(
        \pos
        -
        \frac{
            \mathbf{x} + C\mathbf{y}
        }{
            B + C\left(k + 1\right)
        }
    \right)^2
}
\exp{
    \frac{a}{C}
    \frac{
        \left(
            \mathbf{x} + C \mathbf{y}
        \right)^2
    }{
        B + C\left(k + 1\right)
    }
}.
\end{align*}
Here, only the first term contains the integration variable. This integral is of the form
\begin{equation*}
\int_{\mathbb{R}^d} \dd{\pos} 
e^{-\gamma \left(\pos - \pos_0\right)^2}
=
\left(
    \frac{\pi}{\gamma}
\right)^{d / 2}.
\end{equation*}
In our case, $\gamma = 
\frac{a}{C} 
\left(
    B + C\left(k + 1\right)
\right)$, and so we have
\begin{align*}
\mathcal{I} &=
\left(
    \frac{
        \pi
    }{
        \frac{a}{C}
        \left(
            B + C\left(k + 1\right)
        \right)
    }
\right)^{\frac{d}{2}}
\exp{
    - \frac{a}{C}
    \left(
        A \mathbf{x}^2
        +
        C \mathbf{y}^2
    \right)
}
\exp{
    \frac{a}{C}
    \frac{
        \left(
            \mathbf{x} + C \mathbf{y}
        \right)^2
    }{
        B + C\left(k + 1\right)
    }
}
\\
&=
\left(
    \frac{
        \pi
    }{
        \frac{a}{C}
        \left(
            B + C\left(k + 1\right)
        \right)
    }
\right)^{\frac{d}{2}}
\exp[
    -\frac{a A}{C} \mathbf{x}^2
    - a \mathbf{y}^2
    + \frac{a C}{B + C\left(k + 1\right)}
    \left(
        \frac{\mathbf{x}}{C}
        + \mathbf{y}
    \right)^2
].
\end{align*}.
\end{proof}

\begin{definition}
\label{Def:recursive-integral}
For every $j=1,\dots,P-1$ and a given $i=1,\dots,N$, the quantity $\RecursiveIntegral{i}{j}$ is defined recursively as
\begin{align}
\RecursiveIntegralArg{i}{j}{j+2}
&=
\int \dd{\posbead{i}{j+1}}
\RecursiveIntegral{i}{j - 1}
\exp{
-a \left[
    \left(
        \posbead{i}{j+1}
        -
        \posbead{i}{j+2}
    \right)^2
    + 
    k_{j+1} 
    \left(
        \posbead{i}{j+1}
    \right)^2
\right]
},
\\
\RecursiveIntegralArg{i}{0}{2}
&=
\exp{
-a \left[
    \left(
        \posbead{i}{1}
        -
        \posbead{i}{2}
    \right)^2
    + 
    k_1
    \left(
        \posbead{i}{1}
    \right)^2
\right]
},
\end{align}
where $a,k_j > 0$, and $k_j$ depends on the index $j$.
\end{definition}

\begin{theorem}
\label{Thm:recursive-integral-closed-form}
For every $i=1,\dots,N$ and $j=1,\dots,P-1$, 
\begin{equation}
\RecursiveIntegralArg{i}{j}{j+2}
=
\left(
    \frac{\pi}{a}
\right)^{
    \frac{d j}{2}
}
\left(
    \frac{1}{C_j}
\right)^{
    \frac{d}{2}
}
\exp{
-\frac{a}{C_j}
\left[
    A_j
    \left(
        \posbead{i}{1}
    \right)^2
    -
    2 \posbead{i}{1} \cdot \posbead{i}{j+2}
    +
    B_j
    \left(
        \posbead{i}{j+2}
    \right)^2
\right]
},
\end{equation}
where $A_j, B_j, C_j$ are polynomials in $\set{k_m}_{m=1}^{j+1}$ defined recursively as
\begin{equation}
\label{Eq:general-polynomial-recurrence}
\begin{cases}
A_{j} = \frac{
    A_{j - 1} C_{j} - 1
}{
    C_{j - 1}
},
\\
B_{j} = B_{j - 1} + k_{j + 1} C_{j - 1},
\\
C_{j} = B_{j - 1} + \left(k_{j + 1} + 1\right) C_{j - 1},
\end{cases}
\end{equation}
with the initial conditions
\begin{equation}
\label{Eq:general-polynomial-recurrence-init-cond}
\begin{cases}
A_{1} = 1 + k_2 + k_1 \left(k_2 + 2\right), 
\\
B_{1} = k_2 + 1,
\\
C_{1} = k_2 + 2.
\end{cases}
\end{equation}
\end{theorem}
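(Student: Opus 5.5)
The plan is induction on $j$, using \Cref{Lem:gaussian-integral} once per step. Each application integrates out one bead. The inductive hypothesis puts the integrand exactly in the form the lemma needs, namely a Gaussian in $\posbead{i}{1}$ and the integration variable with coefficients $A,B,C$, multiplied by $\exp\{-a[(\pos-\mathbf{y})^2+k\pos^2]\}$. Throughout I take $\mathbf{x}=\posbead{i}{1}$, $\pos=\posbead{i}{j+1}$ (the integration variable), $\mathbf{y}=\posbead{i}{j+2}$ and $k=k_{j+1}$. Applying the lemma then comes down to reading off three coefficients and checking them against \Cref{Eq:general-polynomial-recurrence}.

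\emph{Base case $j=1$.} I rewrite $\RecursiveIntegral{i}{0}=\exp\{-a[(1+k_1)(\posbead{i}{1})^2-2\posbead{i}{1}\cdot\posbead{i}{2}+(\posbead{i}{2})^2]\}$. This is the lemma's first factor with $A=1+k_1$, $B=1$, $C=1$. Applying the lemma with $k=k_2$:
\begin{itemize}
\item The prefactor is $(\pi/a)^{d/2}(k_2+2)^{-d/2}$, so $C_1=k_2+2$.
\item The coefficient of $\mathbf{x}^2$ is $-a[(1+k_1)-1/C_1]=-\frac{a}{C_1}[1+k_2+k_1(k_2+2)]$, which gives $A_1$.
\item The cross term is $+\frac{2a}{C_1}\mathbf{x}\cdot\mathbf{y}$.
\item The coefficient of $\mathbf{y}^2$ is $-a(1-1/C_1)=-\frac{a}{C_1}(k_2+1)$, which gives $B_1$.
\end{itemize}
This reproduces \Cref{Eq:general-polynomial-recurrence-init-cond}.

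\emph{Inductive step.} Assume the closed form holds for $j-1$, with endpoints $\posbead{i}{1}$ and $\posbead{i}{j+1}$. Substituting it into the recursive definition (\Cref{Def:recursive-integral}) gives exactly the lemma's integrand with $(A,B,C)=(A_{j-1},B_{j-1},C_{j-1})$ and $k=k_{j+1}$. Define $C_j:=B_{j-1}+(k_{j+1}+1)C_{j-1}$, which is precisely the quantity $B+C(k+1)$ appearing in the lemma. Then:
\begin{itemize}
\item The prefactors combine as $(\pi/a)^{d(j-1)/2}C_{j-1}^{-d/2}\,(\pi C_{j-1}/(aC_j))^{d/2}=(\pi/a)^{dj/2}C_j^{-d/2}$.
\item The $\mathbf{x}^2$ coefficient is $-a[A_{j-1}/C_{j-1}-1/(C_{j-1}C_j)]=-\frac{a}{C_j}\cdot\frac{A_{j-1}C_j-1}{C_{j-1}}=-\frac{a}{C_j}A_j$.
\item The cross term is $\frac{2aC_{j-1}}{C_j}\cdot\frac{1}{C_{j-1}}\,\mathbf{x}\cdot\mathbf{y}=\frac{2a}{C_j}\,\mathbf{x}\cdot\mathbf{y}$.
\item The $\mathbf{y}^2$ coefficient is $-a(1-C_{j-1}/C_j)=-\frac{a}{C_j}(B_{j-1}+k_{j+1}C_{j-1})=-\frac{a}{C_j}B_j$.
\end{itemize}
So the recurrence \Cref{Eq:general-polynomial-recurrence} falls out directly.

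Two side issues need care. First, the lemma's Gaussian integral converges, and the divisions are legitimate, only if $C_{j-1}>0$ and $C_j>0$. I will carry positivity of $B_j$ and $C_j$ along the induction. This holds because $B_1,C_1>0$, all $k_m>0$, and the recurrences for $B_j$ and $C_j$ involve only sums and products of positive quantities.

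Second, and this is the step I expect to be the main obstacle, is the claim that $A_j$ is a \emph{polynomial}, since its recurrence divides by $C_{j-1}$. Writing $C_j-C_{j-1}=B_{j-1}+k_{j+1}C_{j-1}-C_{j-1}$ gives
\[
A_j=k_{j+1}A_{j-1}+\frac{A_{j-1}B_{j-1}-1}{C_{j-1}}.
\]
Polynomiality therefore reduces to showing that $C_j$ divides $A_jB_j-1$ for every $j$. I would first verify this at $j=1$, where $A_1B_1-1=(k_2+2)(k_1k_2+k_1+k_2)$. I would then prove the quotient $D_j:=(A_jB_j-1)/C_j$ satisfies its own polynomial recurrence, by substituting the recurrences for $A_j,B_j,C_j$. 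This is the familiar determinant/continuant identity for the $2\times2$ quadratic-form matrix $\frac{1}{C_j}\begin{pmatrix}A_j&-1\\-1&B_j\end{pmatrix}$.

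For the closed-form statement itself, however, only the rational identities above are needed. Those follow mechanically from \Cref{Lem:gaussian-integral}.
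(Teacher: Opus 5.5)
Your proposal is correct and follows essentially the same route as the paper: induction on $j$, substituting the inductive hypothesis into the Gaussian-integral lemma with $(A,B,C)=(A_{j-1},B_{j-1},C_{j-1})$ and $k=k_{j+1}$, then reading off the new coefficients (your indexing is also more consistent than the paper's inductive step, which writes $k_{j+1}$ where $k_{j+2}$ is meant). Your optional polynomiality aside, which the paper does not address at all, contains a small slip: the identity should read $A_j=(k_{j+1}+1)A_{j-1}+(A_{j-1}B_{j-1}-1)/C_{j-1}$, although the divisibility reduction you draw from it is unaffected.
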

\begin{proof}[Proof of~\Cref{Thm:recursive-integral-closed-form}]
The proof is by induction on $j$. The statement holds for the base case of $j=1$:
\begin{align*}
\RecursiveIntegralArg{i}{1}{3}
&=
\int \dd{\posbead{i}{2}}
\exp{
-a \left[
    \left(
        \posbead{i}{1}
        -
        \posbead{i}{2}
    \right)^2
    + 
    k_1
    \left(
        \posbead{i}{1}
    \right)^2
\right]
}
\exp{
-a \left[
    \left(
        \posbead{i}{2}
        -
        \posbead{i}{3}
    \right)^2
    + 
    k_{2} 
    \left(
        \posbead{i}{2}
    \right)^2
\right]
}
\\
&=
\left(
    \frac{\pi}{a}
\right)^{\frac{d}{2}}
\left(
    \frac{1}{k_2 + 2}
\right)^{\frac{d}{2}}
\exp{
-\frac{a}{k_2 + 2} 
\left[
    \left(
        1 + k_2 + k_1 \left(k_2 + 2\right)
    \right)
    \left(
        \posbead{i}{1}
    \right)^2
    - 
    2 \posbead{i}{1} \cdot \posbead{i}{3}
    + 
    \left(
        k_{2} + 1
    \right)
    \left(
        \posbead{i}{3}
    \right)^2
\right]
}
\\
&=
\left(
    \frac{\pi}{a}
\right)^{\frac{d}{2}}
\left(
    \frac{1}{C_1}
\right)^{\frac{d}{2}}
\exp{
-\frac{a}{C_1} 
\left[
    A_1
    \left(
        \posbead{i}{1}
    \right)^2
    - 
    2 \posbead{i}{1} \cdot \posbead{i}{3}
    + 
    B_1
    \left(
        \posbead{i}{3}
    \right)^2
\right]
}.
\end{align*}
Assuming the hypothesis is true for $j$, we have, by~\Cref{Def:recursive-integral},
\begin{equation*}
\RecursiveIntegralArg{i}{j+1}{j+3}
=
\int \dd{\posbead{i}{j+2}}
\RecursiveIntegral{i}{j}
\exp{
-a \left[
    \left(
        \posbead{i}{j+2}
        -
        \posbead{i}{j+3}
    \right)^2
    + 
    k_{j+1} 
    \left(
        \posbead{i}{j+2}
    \right)^2
\right]
}.
\end{equation*}
According to the assumption,
\begin{align*}
&\RecursiveIntegralArg{i}{j+1}{j+3}
=
\left(
    \frac{\pi}{a}
\right)^{
    \frac{d j}{2}
}
\left(
    \frac{1}{C_j}
\right)^{
    \frac{d}{2}
}
\int \dd{\posbead{i}{j+2}}
\exp{
-\frac{a}{C_j}
\left[
    A_j
    \left(
        \posbead{i}{1}
    \right)^2
    -
    2 \posbead{i}{1} \cdot \posbead{i}{j+2}
    +
    B_j
    \left(
        \posbead{i}{j+2}
    \right)^2
\right]
}
\\
&
\phantom{\RecursiveIntegralArg{i}{j+1}{j+3}\RecursiveIntegralArg{i}{j+1}{j+3}}
\times
\exp{
-a \left[
    \left(
        \posbead{i}{j+2}
        -
        \posbead{i}{j+3}
    \right)^2
    + 
    k_{j+1} 
    \left(
        \posbead{i}{j+2}
    \right)^2
\right]
}
\\
&=
\left(
    \frac{\pi}{a}
\right)^{
    \frac{d \left(j + 1\right)}{2}
}
\left(
    \frac{1}{B_j + C_{j} \left(k_{j+1} + 1\right)}
\right)^{
    \frac{d}{2}
}
\exp \left\{
-\frac{a}{B_j + C_j \left(k_{j+1}+1\right)} 
\left[
    \frac{
        A_j \left(
            B_j + C_j \left(k_{j+1} + 1\right)
        \right)
        - 1
    }{
        C_j
    }
    \left(
        \posbead{i}{1}
    \right)^2
\right.
\right.
\\
&
\left.
\left.
\phantom{
B_j + C_j \left(k_{j+1}+1\right)
B_j + C_j \left(k_{j+1}+1\right)
B_j + C_j
}
    - 
    2 \posbead{i}{1} \cdot \posbead{i}{j+3}
    + 
    \left(
        B_j + k_{j+1} C_j
    \right)
    \left(
        \posbead{i}{j+3}
    \right)^2
\right]
\right\}
\\
&=
\left(
    \frac{\pi}{a}
\right)^{
    \frac{d \left(j + 1\right)}{2}
}
\left(
    \frac{1}{C_{j+1}}
\right)^{
    \frac{d}{2}
}
\exp{
-\frac{a}{C_{j + 1}}
\left[
    A_{j + 1}
    \left(
        \posbead{i}{1}
    \right)^2
    -
    2 \posbead{i}{1} \cdot \posbead{i}{j+3}
    +
    B_{j + 1}
    \left(
        \posbead{i}{j+3}
    \right)^2
\right]
}
,
\end{align*}
where we used~\Cref{Lem:gaussian-integral} to evaluate the Gaussian integral, and the definitions of the polynomials.
\end{proof}

\begin{theorem}
\label{Thm:fibonacci-from-general-recurrence}
If $k_j = k > 0$ for all $j = 1, \dots, P-1$, then
\begin{align}
\func{A_j}{k} 
&= 
\func{F_{2 j + 3}}{\sqrt{k}}
\\
\func{B_j}{k} 
&= 
\func{F_{2 j + 1}}{\sqrt{k}}
\\
\func{C_j}{k}
&=
\func{U_j}{\frac{k + 2}{2}}
\end{align}
where $\func{F_{n}}{x}$ is the $n$th Fibonacci polynomial and $\func{U_{n}}{x}$ is the $n$th Chebyshev polynomial of the second kind. 
\end{theorem}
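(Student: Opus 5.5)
The plan is induction on $j$, using the recurrences \eqref{Eq:general-polynomial-recurrence} with $k_j\equiv k$. Set $x=\sqrt{k}$ and write $G_n:=F_n(x)$, with $G_1=1$, $G_2=x$ and $G_{n}=xG_{n-1}+G_{n-2}$. Rather than working with $U_j$ directly, I would first prove the intermediate claim
\[
B_j=G_{2j+1},\qquad C_j=G_{2j+2}/x,
\]
and only afterwards identify $G_{2j+2}/x$ with $U_j\bigl(\tfrac{k+2}{2}\bigr)$. The base case follows from \eqref{Eq:general-polynomial-recurrence-init-cond}:
\begin{itemize}
\item $G_3=x^2+1=k+1=B_1$;
\item $G_4/x=x^2+2=k+2=C_1$;
\item $G_5=x^4+3x^2+1=k^2+3k+1=1+k+k(k+2)=A_1$.
\end{itemize}

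\textbf{The pair $(B_j,C_j)$.} Subtracting the $B$- and $C$-recurrences gives $C_j=B_j+C_{j-1}$. Assume the claim holds for $j-1$. Then
\[
B_j=B_{j-1}+kC_{j-1}=G_{2j-1}+xG_{2j}=G_{2j+1},
\]
and
\[
C_j=G_{2j+1}+G_{2j}/x=(xG_{2j+1}+G_{2j})/x=G_{2j+2}/x .
\]
This closes the induction for $B_j$ and $C_j$.

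\textbf{Chebyshev identification.} Applying the Fibonacci recurrence twice gives the step-two recurrence $G_{n+2}=(x^2+2)G_n-G_{n-2}$. Put $t=(k+2)/2$, so that $2t=x^2+2$. Then $c_j:=G_{2j+2}/x$ satisfies $c_j=2t\,c_{j-1}-c_{j-2}$, with $c_0=G_2/x=1=U_0(t)$ and $c_1=x^2+2=2t=U_1(t)$. Hence $c_j=U_j(t)$ for all $j$.

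\textbf{The coefficient $A_j$.} Since $k>0$, every $C_j=U_j(t)$ with $t>1$ is positive, so the division in the $A$-recurrence is legitimate. The recurrence can therefore be written as $A_jC_{j-1}=A_{j-1}C_j-1$. Assume $A_{j-1}=G_{2j+1}$ and aim to show $A_j=G_{2j+3}$. Substituting the closed forms for $C_{j-1}$ and $C_j$ and multiplying by $x$, the claim becomes the identity
\[
G_{2j+1}G_{2j+2}-G_{2j+3}G_{2j}=x .
\]
I expect this identity to be the main obstacle. It is a d'Ocagne/Cassini-type identity for Fibonacci polynomials, and I would reduce it to Cassini's identity $G_{n+1}G_{n-1}-G_n^2=(-1)^n$ as follows.

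Take $n=2j$ and expand $G_{n+3}=xG_{n+2}+G_{n+1}$ and $G_{n+2}=xG_{n+1}+G_n$. The left side becomes
\[
G_{n+1}(xG_{n+1}+G_n)-\bigl(x(xG_{n+1}+G_n)+G_{n+1}\bigr)G_n .
\]
After cancellation this simplifies to
\[
x\bigl(G_{n+1}^2-xG_{n+1}G_n-G_n^2\bigr)=x\bigl(G_{n+1}(G_{n+1}-xG_n)-G_n^2\bigr)=x\bigl(G_{n+1}G_{n-1}-G_n^2\bigr).
\]
By Cassini this equals $x(-1)^{2j}=x$, as required.

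Cassini's identity itself follows by a one-line induction on $n$. Alternatively, one can take determinants of $\begin{pmatrix}x&1\\1&0\end{pmatrix}^n$, whose entries are $G_{n+1},G_n,G_n,G_{n-1}$. This completes the induction for $A_j$ and hence the theorem.
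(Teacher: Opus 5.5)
Your proof is correct, and it takes a different route from the paper's.

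The paper's route is as follows:
\begin{itemize}
\item It writes the $(B_j,C_j)$ recurrence as a $2\times 2$ linear map and uses Cayley--Hamilton to decouple it into $X_{j+2}=(k+2)X_{j+1}-X_j$ for both sequences.
\item For $A_j$, it subtracts two consecutive instances of $A_jC_{j-1}=A_{j-1}C_j-1$ and uses $C_j+C_{j-2}=(k+2)C_{j-1}$. This shows that $A_j$ obeys the same linear recurrence.
\item Every identification then reduces to matching the values at $j=1,2$ against $F_{2j+1}$, $F_{2j+3}$ and $U_j$, which satisfy that recurrence as well. This means $A_2$, $B_2$ and $C_2$ must be computed by hand.
\end{itemize}

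You instead induct directly on the first-order coupled system through the intermediate form $C_j=F_{2j+2}(\sqrt{k})/\sqrt{k}$. This makes the Fibonacci--Chebyshev link $F_{2j+2}(x)=x\,U_j(1+x^2/2)$ explicit. You then check the nonlinear $A$-recurrence on the conjectured closed form instead of linearizing it, at the price of needing Cassini's identity at even index.

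What each approach buys:
\begin{itemize}
\item The paper's version is more uniform: a single characteristic polynomial $\lambda^2-(k+2)\lambda+1$ governs all three sequences, and no auxiliary Fibonacci identity is needed.
\item Yours needs only the $j=1$ initial data and avoids the matrix step.
\item Yours also explicitly justifies the division by $C_{j-1}$ through $C_j=U_j(t)>0$ for $t>1$, a point the paper leaves implicit.
\end{itemize}
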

\begin{proof}[Proof of~\Cref{Thm:fibonacci-from-general-recurrence}]
When $k_j$ is independent of $j$, the recurrence relations for $B_j$ and $C_j$ (\Cref{Eq:general-polynomial-recurrence}) can be written in matrix form as
\begin{equation*}
\begin{pmatrix}
    B_{j}
    \\
    C_{j}
\end{pmatrix}
=
\begin{pmatrix}
    1 & k
    \\
    1 & k + 1
\end{pmatrix}
\begin{pmatrix}
    B_{j-1}
    \\
    C_{j-1}
\end{pmatrix}
\equiv
M \mathbf{v}_{j-1},
\end{equation*}
where $M$ is the $2 \times 2$ matrix of the coefficients and $\mathbf{v}_j = \begin{pmatrix}
    B_{j}
    \\
    C_{j}
\end{pmatrix}$. It is straightforward to show that the characteristic polynomial of $M$ is $\func{p}{\lambda} = \lambda^2 - \left(k + 2\right) \lambda + 1$, so by the Cayley-Hamilton theorem,
\begin{equation*}
    M^2 - (k + 2) M + \mathbb{I} = 0.
\end{equation*}
Multiplying by $M^{j-1}$ and recognizing that $\mathbf{v}_j = M^{j-1} \mathbf{v}_1$ yields
\begin{equation*}
    \mathbf{v}_{j+2} - \left(k + 2\right) \mathbf{v}_{j+1} + \mathbf{v}_j = 0,
\end{equation*}
which means that $B_j$ and $C_j$ satisfy
\begin{equation}
\label{Eq:decoupled-const-k-recurrence-for-b-c}
\begin{cases}
B_{j+2} = \left(k + 2\right) B_{j+1} - B_j,
\\
C_{j+2} = \left(k + 2\right) C_{j+1} - C_j.
\end{cases}
\end{equation}
From the definition of Fibonacci polynomials, it follows that
\begin{align*}
\func{F_{2j+5}}{x}
&=
x
\func{F_{2j+4}}{x}
+
\func{F_{2j+3}}{x}
\\
&=
x^2 \func{F_{2j+3}}{x}
+
x \func{F_{2j+2}}{x}
+
\func{F_{2j+3}}{x}
\\
&=
\left(x^2 + 2\right)
\func{F_{2j+3}}{x}
-
\func{F_{2j+1}}{x}.
\end{align*}
In other words, $\func{F_{2j+1}}{\sqrt{k}}$ obeys the same recurrence relation as $\func{B_j}{k}$.
Since they coincide for $j=1$ and $j=2$,
\begin{align*}
\func{B_1}{k}
&=
k + 1
=
\func{F_3}{\sqrt{k}},
\\
\func{B_2}{k}
&= 
k^2 + 3k + 1
=
\func{F_5}{\sqrt{k}},
\end{align*}
we conclude that $\func{B_j}{k} = \func{F_{2j + 1}}{\sqrt{k}}$, for all $j > 1$. Similarly,
\begin{align*}
\func{C_1}{k}
&=
k + 2
=
\func{U_1}{\frac{k + 2}{2}},
\\
\func{C_2}{k}
&= 
k^2 + 4k + 3
=
\func{U_2}{\frac{k + 2}{2}},
\end{align*}
and since, by definition, $\func{U_j}{\frac{k + 2}{2}}$ satisfies the same recurrence relation as $\func{C_j}{k}$, we conclude that they coincide for all $j > 1$.

It remains to show that $\func{A_j}{k} = \func{F_{2 j + 3}}{\sqrt{k}} = B_{j + 1}$. First, by directly evaluating the original recurrence, we see that the first two terms do, in fact, coincide:
\begin{align*}
\func{A_1}{k}
&=
k^2 + 3k + 1
=
\func{F_{5}}{\sqrt{k}},
\\
\func{A_2}{k}
&= 
k^3 + 5k^2 + 6k + 1
=
\func{F_{7}}{\sqrt{k}}.
\end{align*}
Next, we must show that $\func{A_j}{k}$ satisfies the same recurrence as
\begin{equation*}
\func{F_{2j+3}}{x} 
= 
\left(x^2 + 2\right) 
\func{F_{2j+1}}{x} 
-
\func{F_{2j-1}}{x}.
\end{equation*}
To this end, we multiply both sides of the equation that defines $A_j$ by $C_{j-1}$ to get
\begin{equation*}
A_{j} C_{j - 1} = A_{j - 1} C_{j} - 1.
\end{equation*}
By replacing $j \to j - 1$ we obtain a second (equivalent) equation. Subtracting the two equations yields
\begin{equation*}
A_{j}C_{j-1}=
A_{j-1}\left(C_{j}+C_{j-2}\right)-A_{j-2}C_{j-1}.
\end{equation*}
Now, using the decoupled recurrence relation for $C_j$ (\Cref{Eq:decoupled-const-k-recurrence-for-b-c}) and dividing by $C_{j-1}$, we obtain
\begin{equation}
A_j = \left(k+2\right) A_{j-1} - A_{j-2},
\end{equation}
which is exactly the same recurrence as that of $\func{F_{2j+3}}{\sqrt{k}}$, concluding our proof.
\end{proof}

\begin{theorem}
\label{Thm:pf-partition-function-as-recursive-integral}
The canonical partition function for $N$ distinguishable particles in a harmonic trap, within the primitive factorization at finite Trotter number $P$, is given by
\begin{equation}
Z^{\mathrm{PF}}_{P} = 
\left(
    \frac{
        a
    }{
        \pi
    }
\right)^{\frac{d N P}{2}}
\int
\prod_{i=1}^{N}
\dd{\posbead{i}{1}}
\RecursiveIntegral{i}{P-1}
=
\left(
\frac{1}{
    \func{
        F_{2P + 1}
    }{
        \sqrt{k}
    }
    +
    \func{
        F_{2P - 1}
    }{
        \sqrt{k}
    }
    - 2
}
\right)^{\frac{d N}{2}},
\end{equation}
where $a=\frac{1}{2}\beta m\omega_{P}^{2}$ and $k=\left(\frac{\bhw}{P}\right)^2$.
\end{theorem}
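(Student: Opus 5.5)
The plan is to write the primitive-factorization path integral for a single harmonically trapped particle explicitly, and recognize it as the recursive integral of \Cref{Def:recursive-integral} with constant $k_j=k$ and ring closure imposed. Then \Cref{Thm:recursive-integral-closed-form} and \Cref{Thm:fibonacci-from-general-recurrence} reduce everything to one last Gaussian integral over $\posbead{i}{1}$.

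\textbf{Setting up the integrand.} For $V=\sum_i \frac12 m\omega^2 \pos_i^2$, the PF partition function with free-particle normalization is
\[
Z_P^{\mathrm{PF}}=\left(\frac{mP}{2\pi\beta\hbar^2}\right)^{\frac{dNP}{2}}\int \dd[NP]{\pos}\, \boltzmann{(E+\bar U)}.
\]
Since $a=\frac12\beta m\omega_P^2=\frac{mP}{2\beta\hbar^2}$, the prefactor equals $(a/\pi)^{dNP/2}$. Moreover, $\frac{\beta}{P}\cdot\frac12 m\omega^2 = a\,\frac{\omega^2}{P\omega_P^2}=a\left(\bhwP\right)^2=ak$. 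Hence the exponent is
\[
-a\sum_i\sum_{s=1}^{P}\left[\left(\posbead{i}{s}-\posbead{i}{s+1}\right)^2+k\left(\posbead{i}{s}\right)^2\right],
\]
with $\posbead{i}{P+1}=\posbead{i}{1}$. This factorizes over particles, so it suffices to treat one particle and raise the result to the power $N$.

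\textbf{Integrating over beads $2,\dots,P$.} For fixed $i$, I will treat $\posbead{i}{P+1}$ as an independent variable $\mathbf{y}$ and integrate successively over $\posbead{i}{2},\dots,\posbead{i}{P}$. By \Cref{Def:recursive-integral} this produces exactly $\RecursiveIntegralArg{i}{P-1}{P+1}$: the factors $k_1,\dots,k_P$ multiply beads $1,\dots,P$, each bead appearing once. Since no integration touches $\mathbf{y}$, substituting $\mathbf{y}=\posbead{i}{1}$ afterward is legitimate, which justifies the first equality of the statement. \Cref{Thm:recursive-integral-closed-form} with $j=P-1$ then gives
\[
\left(\frac{\pi}{a}\right)^{\frac{d(P-1)}{2}}C_{P-1}^{-d/2}\exp\!\left[-\frac{a}{C_{P-1}}\left(A_{P-1}+B_{P-1}-2\right)\left(\posbead{i}{1}\right)^2\right].
\]

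\textbf{The last Gaussian integral.} Integrating over $\posbead{i}{1}\in\mathbb{R}^d$ gives $\left(\pi C_{P-1}/[a(A_{P-1}+B_{P-1}-2)]\right)^{d/2}$. The $C_{P-1}$ cancels, and together with $(a/\pi)^{dP/2}$ all powers of $a/\pi$ cancel, leaving $(A_{P-1}+B_{P-1}-2)^{-d/2}$ per particle. By \Cref{Thm:fibonacci-from-general-recurrence}, $A_{P-1}=F_{2P+1}(\sqrt k)$ and $B_{P-1}=F_{2P-1}(\sqrt k)$, which yields the claimed formula after raising to the power $N$.

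\textbf{Expected obstacle.} The main difficulty is the index bookkeeping: checking that the chain of \Cref{Def:recursive-integral} attaches every $k$-term and every spring to the right bead, and that closure is imposed only at the end. The case $P=1$ should be checked separately, since there $\posbead{i}{P+1}=\posbead{i}{1}$ directly. A second point is convergence of the final Gaussian, which requires $A_{P-1}+B_{P-1}-2>0$. I would argue this via $F_{2P+1}+F_{2P-1}=L_{2P}(x)$, the Lucas polynomial, which equals $\lambda^P+\lambda^{-P}$ with $\lambda+\lambda^{-1}=k+2>2$; hence it exceeds $2$ for $k>0$. Alternatively, positivity follows because the integrand is a positive, integrable function.
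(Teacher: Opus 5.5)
Your proposal is correct and follows essentially the same route as the paper. Like the paper, you rewrite the PF integrand in terms of $a$ and $k=(\beta\hbar\omega/P)^2$, factorize over particles, and unroll the recursion to the $(P-1)$-th recursive integral. You then impose the ring closure $\mathbf{r}_i^{(P+1)}=\mathbf{r}_i^{(1)}$ to get a single Gaussian in $\mathbf{r}_i^{(1)}$ with coefficient $A_{P-1}+B_{P-1}-2$, and finish with the Fibonacci identification. Your separate check of $P=1$ and your Lucas-polynomial argument that $A_{P-1}+B_{P-1}-2>0$ are small additions that the paper's proof does not spell out.
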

\begin{proof}[Proof of~\Cref{Thm:pf-partition-function-as-recursive-integral}]
The Trotter-discretized partition function of $N$ distinguishable particles is given by
\begin{equation*}
Z^{\mathrm{PF}}_{P}
= 
\left(
    \frac{
        m P
    }{
        2 \pi \hbar^2 \beta
    }
\right)^{\frac{d N P}{2}}
\int
\dd{\slicepos{1}}
\dots
\dd{\slicepos{P}}
\exp{
    -\beta
    \sum_{j=1}^{P}
    \left[
        \sum_{i=1}^{N}
        \springenergyprefix
        \rdiffsquared{i}{j}{i}{j+1}
        +
        \frac{1}{P}
        \func{V}{\slicepos{j}}
    \right]
}.
\end{equation*}
When the potential is a harmonic trap with angular frequency $\omega$, the partition function $Z_P$ becomes
\begin{align*}
Z^{\mathrm{PF}}_{P}
&= 
\left(
    \frac{
        m P
    }{
        2 \pi \hbar^2 \beta
    }
\right)^{\frac{d N P}{2}}
\int
\dd{\slicepos{1}}
\dots
\dd{\slicepos{P}}
\exp{
    -\frac{1}{2}\beta \springconstant
    \sum_{j=1}^{P}
    \left[
        \sum_{i=1}^{N}
        \rdiffsquared{i}{j}{i}{j+1}
        +
        \frac{1}{P}
        \frac{\omega^2}{\springfrequency^2}
        \left(
            \posbead{i}{j}
        \right)^2
    \right]
}
\\
&=
\left(
    \frac{
        a
    }{
        \pi
    }
\right)^{\frac{d N P}{2}}
\int
\dd{\slicepos{1}}
\dots
\dd{\slicepos{P}}
\exp{
    -a
    \sum_{j=1}^{P}
    \left[
        \sum_{i=1}^{N}
        \rdiffsquared{i}{j}{i}{j+1}
        +
        k
        \left(
            \posbead{i}{j}
        \right)^2
    \right]
}
\\
&= 
\left(
    \frac{
        a
    }{
        \pi
    }
\right)^{\frac{d N P}{2}}
\int
\prod_{i=1}^{N}
\dd{\posbead{i}{1}}
\RecursiveIntegral{i}{P-1}
.
\end{align*}
The last equality follows naturally from~\Cref{Def:recursive-integral} (with $k_j=k$), after unrolling the recurrence $P-1$ times.

In the distinguishable particle case, the integrals can be performed independently from each other. It therefore suffices to calculate $\mathcal{I}=\int
\dd{\posbead{i}{1}}
\RecursiveIntegral{i}{P-1}$ for some $i$. According to~\Cref{Thm:recursive-integral-closed-form},
\begin{equation*}
\mathcal{I} =
\left(
    \frac{\pi}{a}
\right)^{
    \frac{d \left(P - 1\right)}{2}
}
\left(
    \frac{1}{C_{P-1}}
\right)^{
    \frac{d}{2}
}
\int
\dd{\posbead{i}{1}}
\exp{
-\frac{a}{C_{P-1}}
\left[
    A_{P-1}
    \left(
        \posbead{i}{1}
    \right)^2
    -
    2 \posbead{i}{1} \cdot \posbead{i}{1}
    +
    B_{P-1}
    \left(
        \posbead{i}{1}
    \right)^2
\right]
},
\end{equation*}
where we used the path closure condition $\posbead{i}{P+1}=\posbead{i}{1}$. The resulting integral is a Gaussian, which evaluates to
\begin{align*}
\mathcal{I} 
&=
\left(
    \frac{\pi}{a}
\right)^{
    \frac{d \left(P - 1\right)}{2}
}
\left(
    \frac{1}{C_{P-1}}
\right)^{
    \frac{d}{2}
}
\int
\dd{\posbead{i}{1}}
\exp{
-\frac{a}{C_{P-1}}
\left(
    A_{P-1}
    +
    B_{P-1}
    -
    2
\right)
    \left(
        \posbead{i}{1}
    \right)^2
}
\\
&=
\left(
    \frac{\pi}{a}
\right)^{
    \frac{d \left(P - 1\right)}{2}
}
\left(
    \frac{1}{C_{P-1}}
\right)^{
    \frac{d}{2}
}
\left(
    \frac{
        \pi
    }{
        \frac{a}{C_{P-1}}
        \left(
            A_{P-1}
            +
            B_{P-1}
            -
            2
        \right)
    }
\right)^{\frac{d}{2}}
\\
&=
\left(
    \frac{\pi}{a}
\right)^{
    \frac{d P}{2}
}
\left(
    \frac{
        1
    }{
        A_{P-1}
        +
        B_{P-1}
        -
        2
    }
\right)^{\frac{d}{2}}
.
\end{align*}
Using~\Cref{Thm:fibonacci-from-general-recurrence}, we obtain
\begin{equation*}
Z^{\mathrm{PF}}_{P}
=
\left(
    \frac{
        a
    }{
        \pi
    }
\right)^{\frac{d N P}{2}}
\mathcal{I}^N
=
\left(
    \frac{
        1
    }{
        A_{P-1}
        +
        B_{P-1}
        -
        2
    }
\right)^{\frac{dN}{2}}
=
\left(
    \frac{
        1
    }{
        \func{F_{2P+1}}{\sqrt{k}}
        +
        \func{F_{2P-1}}{\sqrt{k}}
        -
        2
    }
\right)^{\frac{dN}{2}}.
\end{equation*}
\end{proof}

\begin{corollary}
\label{Cor:pf-dist-discrete-partition-function-same-as-takahashi}
The representation derived in~\Cref{Thm:pf-partition-function-as-recursive-integral} is equivalent to the formula reported in Refs.~\cite{Schweizer1981,TakahashiImada1984}, which can be expressed as
\begin{equation}
Z^{\mathrm{PF}}_{P} 
=
\left(
    \frac{1}{p^P - q^P}
\right)^{d N},
\end{equation}
where $p,q=\sqrt{\alpha^2 + 1} \pm \alpha$ and $\alpha = \frac{\bhw}{2P}$.
\end{corollary}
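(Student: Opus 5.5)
Starting from \Cref{Thm:pf-partition-function-as-recursive-integral}, the claim reduces to the single-coordinate identity
\begin{equation*}
F_{2P+1}(\sqrt{k}) + F_{2P-1}(\sqrt{k}) - 2 = \left(p^P - q^P\right)^2,
\qquad k = 4\alpha^2 ,
\end{equation*}
since then $\left(1/(F_{2P+1}+F_{2P-1}-2)\right)^{dN/2} = \left(1/(p^P-q^P)^2\right)^{dN/2} = (p^P-q^P)^{-dN}$. Note that $p>q>0$, so $p^P - q^P>0$ and the square root is taken without sign ambiguity. Here $\sqrt{k} = \bhw/P = 2\alpha$.

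\emph{Step 1: closed form of the Fibonacci polynomials.} I will use the Binet formula $F_n(x) = (\lambda_+^n - \lambda_-^n)/(\lambda_+ - \lambda_-)$, where $\lambda_\pm$ are the roots of $\lambda^2 - x\lambda - 1 = 0$. With $x = 2\alpha$ we get $\lambda_\pm = \alpha \pm \sqrt{\alpha^2+1}$. Hence $\lambda_+ = p$ and $\lambda_- = -q$, with $pq = 1$ and $\lambda_+ - \lambda_- = p+q$.

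\emph{Step 2: combine the two odd-index terms.} For odd $n$, $\lambda_-^n = -q^n$, so $F_n = (p^n + q^n)/(p+q)$. Therefore
\begin{equation*}
F_{2P+1}+F_{2P-1} = \frac{p^{2P+1}+q^{2P+1}+p^{2P-1}+q^{2P-1}}{p+q}.
\end{equation*}
Using $pq=1$, write $p^{2P-1} = p^{2P}q$ and $q^{2P-1} = q^{2P}p$. The numerator then becomes $p^{2P}(p+q) + q^{2P}(p+q)$, so the sum equals $p^{2P}+q^{2P}$.

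\emph{Step 3: complete the square.} Since $pq=1$, we have $p^{2P}+q^{2P} - 2 = p^{2P} - 2(pq)^P + q^{2P} = (p^P-q^P)^2$. Substituting into \Cref{Thm:pf-partition-function-as-recursive-integral} yields the stated formula.

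The only delicate point is bookkeeping the Fibonacci polynomial convention ($F_1=1$, $F_2=x$), so that the Binet roots match $p$ and $-q$. This can be checked directly: $F_3 = x^2+1 = 4\alpha^2+1$, while $(p^3+q^3)/(p+q) = p^2 - 1 + q^2 = (p+q)^2 - 3 = 4\alpha^2+1$. The two agree, so the identification is correct.
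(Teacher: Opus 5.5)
Your proof is correct and follows the paper's argument: both reduce the claim to $F_{2P+1}(2\alpha)+F_{2P-1}(2\alpha)=p^{2P}+q^{2P}$ and then use $pq=1$ to write $p^{2P}+q^{2P}-2=(p^P-q^P)^2$. The only difference is how the first identity is justified. The paper recognizes $p^n+(-q)^n$ as the Lucas polynomial $L_n(x)$ and cites $L_n=F_{n+1}+F_{n-1}$, whereas you derive it directly from the Binet formula, which makes your argument self-contained (and your remark that $p>q>0$ fixes the sign when taking the square root is a small point the paper leaves implicit).
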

\begin{proof}[Proof of~\Cref{Cor:pf-dist-discrete-partition-function-same-as-takahashi}]
Let $x = \bhw / P$. Note that $p$ and $q$ satisfy $pq=1$.
Therefore,
\begin{equation*}
\left(
    p^P - q^P
\right)^2
=
p^{2P} + q^{2P} -2.
\end{equation*}
Defining $r = -q$, we also have
\begin{equation*}
p^{2P} + q^{2P}
=
p^{2P} + r^{2P}.
\end{equation*}
The corollary follows because
\begin{equation*}
p^{2P} + r^{2P}
=
\func{F_{2P + 1}}{x}
+
\func{F_{2P - 1}}{x},
\end{equation*}
where $\func{F_n}{x}$ is the $n$th Fibonacci polynomial. To prove this, note that $L_n = p^n + r^n$ satisfies $L_0 = 2$ and $L_1 = p - q = x$, implying it coincides with Lucas polynomials. The latter are known to obey $\func{L_n}{x} = \func{F_{n+1}}{x} + \func{F_{n-1}}{x}$ (see Ref.~\cite{Koshy2018_Book}), which completes our proof.
\end{proof}

\begin{corollary}
\label{Cor:pf-dist-discrete-partition-function-infinite-p-limit}
In the limit $P\to\infty$, the discretized partition function $Z_{P}^{\mathrm{PF}}$ approaches the exact quantum partition function of a harmonic oscillator.
\end{corollary}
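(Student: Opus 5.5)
The plan is to start from the closed form of \Cref{Cor:pf-dist-discrete-partition-function-same-as-takahashi}, $Z^{\mathrm{PF}}_P = (p^P - q^P)^{-dN}$, with $p,q=\sqrt{\alpha^2+1}\pm\alpha$ and $\alpha=\bhw/(2P)$, and to take $P\to\infty$ directly. Since the target $Z=\left(\frac{1}{2}\csch\left(\frac{1}{2}\bhw\right)\right)^{dN}$ factorizes over the $dN$ Cartesian degrees of freedom in the same way, it suffices to show that
\begin{equation*}
p^P - q^P \longrightarrow 2\sinh\left(\tfrac{1}{2}\bhw\right).
\end{equation*}
Because $pq=1$, I will write $p=e^{\theta}$ and $q=e^{-\theta}$ with $\theta=\operatorname{arcsinh}(\alpha)$. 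This is valid since $\sqrt{\alpha^2+1}+\alpha=e^{\operatorname{arcsinh}\alpha}$. Then $p^P-q^P=2\sinh(P\theta)$, an exact identity at every finite $P$.

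It remains to show $P\theta\to\bhw/2$. From the Taylor expansion $\operatorname{arcsinh}(\alpha)=\alpha-\alpha^3/6+\bigO{\alpha^5}$ and $\alpha=\bhw/(2P)$, we get
\begin{equation*}
P\theta=\frac{\bhw}{2}-\frac{(\bhw)^3}{48P^2}+\bigO{P^{-4}}\longrightarrow\frac{\bhw}{2}.
\end{equation*}
Continuity of $\sinh$ and of $x\mapsto x^{-dN}$ on $(0,\infty)$ then gives $Z^{\mathrm{PF}}_P\to Z$. Positivity holds because $\bhw>0$, so no division-by-zero issue arises. As a byproduct, the expansion shows the $\bigO{P^{-2}}$ error expected for the PF, which I would mention as a consistency check.

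There is no serious obstacle. The only point needing care is justifying the hyperbolic parametrization: $p$ is real and greater than $1$, and $q=1/p$. If I instead worked from the Fibonacci form of \Cref{Thm:pf-partition-function-as-recursive-integral}, the extra step would be the Lucas identity already used in the corollary, $F_{2P+1}+F_{2P-1}-2=(p^P-q^P)^2$. Passing through the corollary avoids that, so I would take that route.
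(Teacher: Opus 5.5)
Your proof is correct, and it starts where the paper's does: both use the closed form $Z^{\mathrm{PF}}_P=(p^P-q^P)^{-dN}$ from the preceding corollary and reduce to a single one-dimensional oscillator. You differ in how the limit is evaluated.

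The paper replaces $\sqrt{\alpha^2+1}$ by $1$ inside the $P$-th powers and then uses $(1\pm\bhw/(2P))^P\to e^{\pm\bhw/2}$. That replacement is asserted without comment. Strictly, it needs the observation that $(\sqrt{\alpha^2+1}\pm\alpha)/(1\pm\alpha)=1+\bigO{P^{-2}}$, so the $P$-th power of this ratio tends to $1$.

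Your parametrization $p=e^{\theta}$, $q=e^{-\theta}$ with $\theta=\operatorname{arcsinh}\alpha$ makes $p^P-q^P=2\sinh(P\theta)$ an exact identity at every $P$. The only limit left is $P\operatorname{arcsinh}(\bhw/(2P))\to\bhw/2$, which follows at once from the derivative of $\operatorname{arcsinh}$ at $0$.

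The paper's route is shorter and relies only on the familiar exponential limit. Yours is fully rigorous, and it also gives the explicit leading correction $-(\bhw)^3/(48P^2)$ in the argument of $\sinh$. That correction is the $P^{-2}$ convergence expected of the second-order factorization, which the paper's computation does not display.
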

\begin{proof}[Proof of~\Cref{Cor:pf-dist-discrete-partition-function-infinite-p-limit}]
It suffices to show this for a single one-dimensional oscillator. From~\Cref{Cor:pf-dist-discrete-partition-function-same-as-takahashi} follows that
\begin{align*}
\lim_{P \to \infty}
Z_{P}^{\mathrm{PF}}
&=
\lim_{P \to \infty}
\frac{
    1
}{
    \left(
        \sqrt{
            \left( \frac{\bhw}{2P} \right)^2
            + 1
        }
        +
        \frac{\bhw}{2P}
    \right)^P 
    - 
    \left(
        \sqrt{
            \left( \frac{\bhw}{2P} \right)^2
            + 1
        }
        -
        \frac{\bhw}{2P}
    \right)^P 
}
\\
&=
\lim_{P \to \infty}
\frac{
    1
}{
    \left(
        1
        +
        \frac{\bhw}{2P}
    \right)^P 
    - 
    \left(
        1
        -
        \frac{\bhw}{2P}
    \right)^P 
}
\\
&=
\frac{
    1
}{
    e^{\bhw / 2} 
    - 
    e^{-\bhw / 2} 
},
\end{align*}
which is the familiar expression for the canonical partition function of a quantum harmonic oscillator.
\end{proof}

\begin{theorem}
\label{Thm:gsf-partition-function-as-recursive-integral}
The canonical partition function for $N$ distinguishable particles in a harmonic trap, within the GSF at even $P$, is given by
\begin{equation}
Z^{\mathrm{GSF}}_{P} = 
\left(
    \frac{
        a
    }{
        \pi
    }
\right)^{\frac{d N P}{2}}
\int
\prod_{i=1}^{N}
\dd{\posbead{i}{1}}
\RecursiveIntegral{i}{P-1}
=
\left(
\frac{1}{
    A_{P - 1}
    +
    B_{P - 1}
    - 2
}
\right)^{\frac{d N}{2}},
\end{equation}
where $a=\frac{1}{2}\beta m\omega_{P}^{2}$ and
\begin{equation}
k_j = 
\begin{cases}
\frac{2}{3}\left(\bhwP\right)^{2} 
\left[
    1+\frac{\alpha}{3}\left(\bhwP\right)^{2}
\right]
,
& 
j\in\mathrm{odd},
\\
\frac{4}{3}
\left(\bhwP\right)^{2}
\left[
    1+\frac{1 - \alpha}{6}\left(\bhwP\right)^{2}
\right],
&
j\in\mathrm{even}.
\end{cases}
\end{equation}
\end{theorem}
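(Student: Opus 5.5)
The proof follows the structure of the proof of~\Cref{Thm:pf-partition-function-as-recursive-integral}. The only new ingredient is showing that the GSF action for a harmonic trap reduces to the same Gaussian chain, now with alternating coefficients $k_j$. Using~\Cref{Eq:tuckerman-gsf-partition-function,Eq:gsf-weighting-factor}, I would write the exponent for $V=\frac12 m\omega^2\pos^2$ as $-\beta$ times the sum of the spring term, the potential term $\bar U$, and the GSF correction. Since $\physPotForceBead{\ell}{s}=-m\omega^2\posbead{\ell}{s}$, the squared forces are again quadratic in the bead positions. The correction is therefore diagonal in beads, and the whole exponent has the form $-a\sum_j\sum_i[(\posbead{i}{j}-\posbead{i}{j+1})^2+k_j(\posbead{i}{j})^2]$ with $a=\frac12\beta m\omega_P^2$.

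The key step is collecting the coefficients on odd and even beads and dividing by $a$. Set $x=\bhw/P$, so that $\omega^2/(P\omega_P^2)=x^2/P\cdot P/P$; more precisely $\beta\,\frac12 m\omega^2/P = a\,x^2$. The pieces are as follows.

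\begin{itemize}
\item \textbf{Odd bead $2s-1$.} The potential contributes $\frac1P-\frac1{3P}=\frac{2}{3P}$ from combining $\bar U$ with the $-V(\slicepos{2s-1})/(3P)$ term. The force term contributes $\alpha m^2\omega^4\pos^2/(9m\omega_P^2P^2)$.
\item \textbf{Even bead $2s$.} The potential contributes $\frac1P+\frac1{3P}=\frac4{3P}$, and the force term carries $(1-\alpha)$.
\end{itemize}

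Dividing by $a$ gives $k_{\rm odd}=\frac23x^2+\frac{2\alpha}{9}x^4=\frac23x^2[1+\frac\alpha3x^2]$ and $k_{\rm even}=\frac43x^2+\frac{2(1-\alpha)}9x^4=\frac43x^2[1+\frac{1-\alpha}6x^2]$. These match the statement. The computation uses $\omega^2/\omega_P^2=x^2$ and $\beta\omega_P^2\cdot m/2=a$, so the force term becomes $\beta m\omega^4/(9\omega_P^2P^2)=2a x^4/(9\cdots)$. I expect this bookkeeping of prefactors, including the $P$-factors hidden in $\omega_P$ and the choice of which beads are odd, to be the main place where errors can occur, so I would verify it carefully. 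Because $P$ is even, the parity of $j$ alternates consistently around the ring, with $k_1$ odd and $k_P$ even.

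Once the action is in this form, the prefactor $(a/\pi)^{dNP/2}$ is the same free-particle normalization as in the PF. The particles again decouple, and unrolling~\Cref{Def:recursive-integral} $P-1$ times gives $\int\prod_i\dd{\posbead{i}{1}}\RecursiveIntegral{i}{P-1}$. The site-dependent $k_j$ are allowed by that definition, and the term with $k_P$ is absorbed in the last step of the recursion; I would check that the indexing of the recursion indeed places $k_{j+1}$ on bead $j+1$. \Cref{Thm:recursive-integral-closed-form} then applies verbatim, since it was proven for general $k_j$. Imposing closure $\posbead{i}{P+1}=\posbead{i}{1}$ and performing the final Gaussian integral exactly as in the PF proof yields $(\pi/a)^{dP/2}(A_{P-1}+B_{P-1}-2)^{-d/2}$ per particle.

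Raising this to the $N$th power and multiplying by the prefactor gives the claim. Positivity of $A_{P-1}+B_{P-1}-2$ is needed for convergence. It follows from $k_j>0$ because the quadratic form is positive definite.
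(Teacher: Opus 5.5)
Your proposal is correct and takes essentially the paper's route. You reduce the harmonic-trap GSF action to the Gaussian chain $-a\sum_j\sum_i[(\mathbf{r}_i^{(j)}-\mathbf{r}_i^{(j+1)})^2+k_j(\mathbf{r}_i^{(j)})^2]$ with alternating $k_j$, collecting coefficients from $\bar{U}$ plus the re-weighting factor where the paper uses $V_e$ and $V_m$ directly, which is the same bookkeeping; you then apply \Cref{Thm:recursive-integral-closed-form} and the closing Gaussian integral exactly as in the PF case. One slip in your intermediate bookkeeping: $\omega^2/\omega_P^2 = P x^2$ (equivalently $\omega^2/(P\omega_P^2)=x^2$), not $x^2$, but your final expressions for $k_{\mathrm{odd}}$ and $k_{\mathrm{even}}$ are correct.
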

\begin{proof}[Proof of~\Cref{Thm:gsf-partition-function-as-recursive-integral}]
The structure of the proof is similar to that of~\Cref{Thm:pf-partition-function-as-recursive-integral}. The main difference is that the original discretized partition function has a slightly different effective potential, and that we can no longer assume that $k$ is constant in each imaginary-time slice.

In the generalized Suzuki-Chin factorization,
\begin{equation*}
Z^{\mathrm{GSF}}_{P}
= 
\left(
    \frac{
        a
    }{
        \pi
    }
\right)^{\frac{d N P}{2}}
\int
\dd{\slicepos{1}}
\dots
\dd{\slicepos{P}}
\exp{
    -\beta
    \sum_{j=1}^{P}
    \sum_{i=1}^{N}
    \springenergyprefix
    \rdiffsquared{i}{j}{i}{j+1}
    - \beta
    \left[
        \frac{2}{3P}
        \func{V_e}{\slicepos{2j - 1}}
        +
        \frac{4}{3P}
        \func{V_m}{\slicepos{2j}}
    \right]
}.
\end{equation*}
In the case of a harmonic trap,
\begin{equation*}
\grad_{i}{\func{V}{\slicepos{j}}}
=
\frac{1}{2} m \omega^2
\grad_{i}
\left[
    \left(
        \posbead{1}{j}
    \right)^2
    +
    \dots
    +
    \left(
        \posbead{N}{j}
    \right)^2
\right]
=
m \omega^2
\posbead{i}{j}.
\end{equation*}
The effective GSF potentials are
\begin{align*}
\func{V_e}{\slicepos{j}}
&= 
\func{V}{\slicepos{j}}
+
\frac{\alpha}{6m} \left(\frac{\beta\hbar}{P}\right)^2
\sum_{i=1}^{N}
\left(
    \grad_{i}{\func{V}{\slicepos{j}}}
\right)^2
=
\frac{1}{2} m \omega^2
\left[
    1
    +
    \frac{\alpha}{3}
    \left(
        \bhwP
    \right)^2
\right]
\sum_{i=1}^{N}
\left(
    \posbead{i}{j}
\right)^2
,
\\
\func{V_m}{\slicepos{j}}
&= 
\func{V}{\slicepos{j}}
+
\frac{1 - \alpha}{12m} \left(\frac{\beta\hbar}{P}\right)^2
\sum_{i=1}^{N}
\left(
    \grad_{i}{\func{V}{\slicepos{j}}}
\right)^2
=
\frac{1}{2} m \omega^2
\left[
    1
    +
    \frac{1 - \alpha}{6}
    \left(
        \bhwP
    \right)^2
\right]
\sum_{i=1}^{N}
\left(
    \posbead{i}{j}
\right)^2
.
\end{align*}
This implies an effective harmonic potential,
\begin{equation*}
\func{\Phi}{\slicepos{j}}
=
\frac{1}{2} m \Omega_j^2 \sum_{i=1}^{N} \left(\posbead{i}{j}\right)^2,
\end{equation*}
whose frequency depends on the imaginary-time slice parity,
\begin{equation}
\Omega_j^2 =
\omega^2
\begin{cases}
\frac{2}{3P}
\left[
    1+\frac{\alpha}{3}\left(\bhwP\right)^{2}
\right]
,
& 
j\in\mathrm{odd},
\\
\frac{4}{3P}
\left[
    1+\frac{1 - \alpha}{6}\left(\bhwP\right)^{2}
\right],
&
j\in\mathrm{even}.
\end{cases}
\end{equation}
Therefore, the partition function can be re-written as
\begin{equation*}
Z^{\mathrm{GSF}}_{P}
=
\left(
    \frac{
        a
    }{
        \pi
    }
\right)^{\frac{d N P}{2}}
\int
\dd{\slicepos{1}}
\dots
\dd{\slicepos{P}}
\exp{
    -a
    \sum_{j=1}^{P}
    \left[
        \sum_{i=1}^{N}
        \rdiffsquared{i}{j}{i}{j+1}
        +
        k_j
        \left(
            \posbead{i}{j}
        \right)^2
    \right]
},
\end{equation*}
where $a$ is defined as before and $k_j$ is
\begin{equation}
k_j = 
\begin{cases}
\frac{2}{3}\left(\bhwP\right)^{2} 
\left[
    1+\frac{\alpha}{3}\left(\bhwP\right)^{2}
\right]
,
& 
j\in\mathrm{odd},
\\
\frac{4}{3}
\left(\bhwP\right)^{2}
\left[
    1+\frac{1 - \alpha}{6}\left(\bhwP\right)^{2}
\right],
&
j\in\mathrm{even}.
\end{cases}
\end{equation}
The rest of the proof is similar to that of~\Cref{Thm:pf-partition-function-as-recursive-integral}, because~\Cref{Thm:recursive-integral-closed-form} holds for $j$-dependent $k$. However, this also means that we cannot use~\Cref{Thm:fibonacci-from-general-recurrence}. The GSF partition function is therefore
\begin{equation*}
Z^{\mathrm{GSF}}_{P} 
=
\left(
\frac{1}{
    A_{P - 1}
    +
    B_{P - 1}
    - 2
}
\right)^{\frac{d N}{2}},
\end{equation*}
which completes our proof.
\end{proof}

\begin{lemma}
\label{Lem:cyclic-integral-chebyshev}
Let $\mathbf{x}_i \in \mathbb{R}^d$, and $J>2$, $a>0$, $C>0$. Assuming $\mathbf{x}_{N + 1} = \mathbf{x}_1$,
\begin{equation}
\mathcal{I}
=
\int
\dd{\mathbf{x}_1}
\dots
\dd{\mathbf{x}_N}
\exp{
    -\frac{a}{C}
    \sum_{i=1}^{N}
    \left[
        J \left(\mathbf{x}_{i}\right)^{2}
        -2 \mathbf{x}_{i} \cdot \mathbf{x}_{i+1}
    \right]
}
=
\left(
    \frac{\pi C}{a}
\right)^{\frac{dN}{2}}
\left(
    2 \func{T_N}{\frac{J}{2}} - 2
\right)^{-\frac{d}{2}},
\end{equation}
where $\func{T_n}{x}$ is the $n$th Chebyshev polynomial of the first kind.
\end{lemma}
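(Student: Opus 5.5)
The plan is to recognise the integrand as a Gaussian whose quadratic form is a circulant matrix, and then to evaluate its determinant through the eigenvalues. The $d$ Cartesian components decouple, so it suffices to treat $d=1$ and raise the result to the power $d$.

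For $d=1$ the exponent is $-\frac{a}{C}\,\mathbf{x}^{T} M \mathbf{x}$, where $M$ is the $N\times N$ symmetric circulant matrix with $J$ on the diagonal and $-1$ on the two cyclic off-diagonals. The cross term $-2\sum_i x_i x_{i+1}$ splits symmetrically, so the corner entries are included. The standard Gaussian formula then gives
\[
\mathcal{I}_{d=1}=\left(\frac{\pi C}{a}\right)^{N/2}(\det M)^{-1/2},
\]
provided $M$ is positive definite.

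Positive definiteness and the determinant both come from the eigenvalues. The eigenvalues of $M$ are $\lambda_m = J - 2\cos(2\pi m/N)$ for $m=0,\dots,N-1$. Since $J>2$, each $\lambda_m>0$, so $M$ is positive definite and the integral converges.

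The main step is the identity
\[
\prod_{m=0}^{N-1}\bigl(J-2\cos(2\pi m/N)\bigr)=2T_N(J/2)-2 .
\]
To prove it, write $J=z+z^{-1}$ with $z>1$, which is possible because $J>2$. Then
\[
J-2\cos\theta_m=z^{-1}(z-e^{i\theta_m})(z-e^{-i\theta_m}), \qquad \theta_m=2\pi m/N .
\]
Taking the product over $m$ and using $\prod_m (z-\omega^m)=z^N-1$ for the $N$th roots of unity $\omega^m$, together with the fact that $\{e^{-i\theta_m}\}$ is the same set, gives
\[
\det M = z^{-N}(z^N-1)^2 = z^N+z^{-N}-2 .
\]
Finally, $T_N\bigl((z+z^{-1})/2\bigr)=(z^N+z^{-N})/2$, which follows from $T_N(\cosh t)=\cosh(Nt)$ with $z=e^t$. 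This identifies the product as $2T_N(J/2)-2$.

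An alternative route is the transfer-matrix technique already used in \Cref{Thm:recursive-integral-closed-form}: integrate the variables out successively and read off the trace of the power of a $2\times 2$ matrix. The eigenvalue route is shorter, and I will use it.

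The case $N=1$ needs separate care, since the circulant picture degenerates. Here the exponent is $(J-2)x_1^2$ and $T_1(J/2)=J/2$, so the formula still holds; for $N=2$ the general argument works directly. Combining the pieces gives
\[
\mathcal{I}=\left(\frac{\pi C}{a}\right)^{dN/2}\bigl(2T_N(J/2)-2\bigr)^{-d/2} .
\]
The main obstacle is the product-to-Chebyshev identity, and the substitution $J=z+z^{-1}$ handles it cleanly.
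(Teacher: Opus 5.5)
Your proof is correct, but you evaluate the determinant by a different route than the paper.

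The paper also reduces $\mathcal{I}$ to a Gaussian with the circulant matrix $M=Q_N\otimes\mathbb{I}_d$. It then computes $\det Q_N$ by Laplace expansion along the first row. The $(1,1)$ cofactor is the open-chain tridiagonal determinant $D_{N-1}$, and the two corner minors each contribute $-(D_{N-2}+1)$. Using $D_N=JD_{N-1}-D_{N-2}=U_N(J/2)$ and $2T_N=U_N-U_{N-2}$, this gives $D_N-D_{N-2}-2=2T_N(J/2)-2$. You instead diagonalize the circulant with discrete Fourier modes and evaluate the eigenvalue product via $J=z+z^{-1}$ and roots of unity.

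Your route gives two things the paper leaves implicit:
\begin{itemize}
\item Positive-definiteness of $M$, which the paper only asserts, follows at once from $\lambda_m\ge J-2>0$. This is exactly where the hypothesis $J>2$ is used.
\item The small cases are handled correctly. The paper's cofactor bookkeeping, with separate $-1$ entries at positions $(1,2)$ and $(1,N)$, tacitly assumes $N\ge 3$. For $N=2$ the off-diagonal entries are really $-2$, which your eigenvalue formula produces. These cases matter, because the lemma is applied to cycles of every length $q=1,\dots,N$ in the bosonic recurrence.
\end{itemize}
Writing $M=J\,\mathbb{I}-S-S^{-1}$, with $S$ the cyclic shift, covers all $N\ge 1$ at once, so your separate $N=1$ check is not even needed.

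The paper's approach has its own merit. It stays within real polynomial algebra and ties the cyclic determinant to the same Chebyshev-$U$ three-term recurrences that govern the open-chain integrals elsewhere in the SI, e.g.\ $C_j=U_j\bigl(\tfrac{k+2}{2}\bigr)$.
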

\begin{proof}[Proof of~\Cref{Lem:cyclic-integral-chebyshev}]
Denoting $\mathbf{X} = \left(\mathbf{x}_{1}, \dots, \mathbf{x}_{N}\right)^T$, the integral can be re-written as
\begin{equation*}
\mathcal{I}
=
\int
\dd{\mathbf{x}_1}
\dots
\dd{\mathbf{x}_N}
\exp{
    -\frac{a}{C}
    \mathbf{X}^T M \mathbf{X}
},
\end{equation*}
where $M$ is a positive-definite $dN \times dN$ matrix, which can be expressed as a Kronecker product of a $N \times N$ circulant matrix $Q_N$ and the $d \times d$ identity matrix:
\begin{equation*}
M
=
\begin{pmatrix}
J & -1 & 0 & \dots & -1\\
-1 & J & -1 & \dots & 0\\
0 & -1 & J & \dots & 0\\
\vdots &  &  & \ddots & \vdots\\
-1 & 0 & 0 & \dots & J
\end{pmatrix}_{N \times N}
\otimes
\mathbb{I}_d.
\end{equation*}
In this form, $\mathcal{I}$ is a $dN$-dimensional Gaussian integral, equal to
\begin{equation*}
\mathcal{I}
=
\sqrt{
    \frac{
        \left(\pi C / a\right)^{dN}
    }{
        \det M
    }
}.
\end{equation*}
The determinant of the circulant matrix can be calculated using the standard method of Laplace expansion to minors. Let $\left[Q_N\right]_{i,j}$ denote the $(i,j)$ minor of $Q_N$, and let $D_N$ denote the determinant of a symmetric tridiagonal matrix whose nonzero entries coincide with those of $Q_N$  at the same positions. Then,
\begin{align*}
\det Q_N
&=
J D_{N-1}
+ \left[Q_N\right]_{1,2}
+ \left(-1\right)^{N} \left[Q_N\right]_{1,N}
\\
&=
J D_{N-1}
-
\left(
    D_{N-2}
    +
    1
\right)
-
\left(1 + D_{N-2}\right)
\\
&=
D_{N} - D_{N-2} - 2
\\
&=
2\func{T_N}{\frac{J}{2}}
- 2,
\end{align*}
where we used the fact that $D_N = J D_{N-1} - D_{N-2} = \func{U_N}{J / 2}$, as well as $2 \func{T_N}{x} = \func{U_N}{x} - \func{U_{N-2}}{x}$ (see Ref.~\cite{AbramowitzStegun1965_Book}). Therefore,
\begin{equation*}
\det M = 2^{d} \left[
    \func{T_N}{\frac{J}{2}}
    - 1
\right]^d.
\end{equation*}
Substituting this into the previous expression for $\mathcal{I}$ yields the desired result.
\end{proof}

\begin{theorem}
\label{Thm:bosonic-pf-partition-function-as-recursive-integral}
The canonical partition function for $N$ indistinguishable particles in a harmonic trap, within either the PF or the GSF at finite Trotter number $P$, is given by
\begin{align}
Z_{P,N} 
&= 
\frac{1}{\fact{N}}
\left(
    \frac{
        a
    }{
        \pi
    }
\right)^{\frac{d N P}{2}}
\sum_{\sigma \in \SymN{N}}
\left(
    \pm 1
\right)^{\sigma}
\int
\prod_{i=1}^{N}
\dd{\posbead{i}{1}}
\RecursiveIntegralPerm{i}{P-1}{\sigma}
\\
&=
\frac{1}{N}
\sum_{\ell = 1}^{N}
\left(\pm 1\right)^{\ell - 1}
\left[
    2 \func{
        T_{\ell}
    }{
        \frac{J_{P-1}}{2}
    }
    - 2
\right]^{-\frac{d}{2}}
Z_{P,N - \ell},
\end{align}
with the initial condition $
Z_{P,0} = 1
$. The notation $\left(
    \pm 1
\right)^{\sigma}$ means either $+1$ (bosons) or $\left(-1\right)^{\sigma}\equiv\sgn(\sigma)$ (fermions).
\end{theorem}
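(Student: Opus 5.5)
The plan has two parts. First we establish the integral representation (first equality); then we reduce the permutation sum to a cycle-type recurrence, evaluating each cycle by \Cref{Lem:cyclic-integral-chebyshev}.

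\emph{First equality.} Start from the bosonic/fermionic discretized partition function (\Cref{Eq:bosonic-factorial-partition-function}, or \Cref{Eq:gsf-bosonic-partition-function} for GSF) specialized to the harmonic trap. As in the proofs of \Cref{Thm:pf-partition-function-as-recursive-integral,Thm:gsf-partition-function-as-recursive-integral}, the effective potential on slice $j$ becomes $a k_j \sum_i (\posbead{i}{j})^2$, with $k_j$ constant for PF and parity-dependent for GSF. The permutation changes only the closure condition to $\posbead{i}{P+1}=\posbead{\sigma(i)}{1}$. For each $i$, unroll the integrals over slices $2,\dots,P$ exactly as in \Cref{Def:recursive-integral}; the last factor is now evaluated at $\posbead{\sigma(i)}{1}$. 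This gives the first line, with the sign $(\pm1)^\sigma$ inserted for fermions.

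\emph{Single cycle.} By \Cref{Thm:recursive-integral-closed-form}, each $\RecursiveIntegralPerm{i}{P-1}{\sigma}$ equals $(\pi/a)^{d(P-1)/2} C_{P-1}^{-d/2}$ times a Gaussian with exponent
\[
-\tfrac{a}{C_{P-1}}\big[A_{P-1}(\posbead{i}{1})^2 - 2\posbead{i}{1}\cdot\posbead{\sigma(i)}{1} + B_{P-1}(\posbead{\sigma(i)}{1})^2\big].
\]
Take the product over $i$. Since $\sigma$ is a bijection, each $\posbead{j}{1}$ appears once as a ``$B$'' endpoint and once as an ``$A$'' endpoint. The diagonal coefficients therefore combine into $J_{P-1}=A_{P-1}+B_{P-1}$, and the quadratic form decouples into independent cycles of $\sigma$. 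Applying \Cref{Lem:cyclic-integral-chebyshev} with $J=J_{P-1}$ and $C=C_{P-1}$ to a cycle of length $\ell$ gives $(\pi C_{P-1}/a)^{d\ell/2}\,[2T_\ell(J_{P-1}/2)-2]^{-d/2}$. The $C_{P-1}$ powers cancel against the $C_{P-1}^{-d/2}$ prefactors, and the powers of $\pi/a$ cancel against $(a/\pi)^{dNP/2}$. Hence each cycle contributes exactly $z_\ell \equiv [2T_\ell(J_{P-1}/2)-2]^{-d/2}$.

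We must also verify the lemma's hypothesis $J_{P-1}>2$. This should follow from the distinguishable result being a finite positive Gaussian integral, or directly from the recurrences. For a length-1 cycle, \Cref{Lem:cyclic-integral-chebyshev} degenerates: the exponent is $(J-2)x^2$ and $2T_1(J/2)-2=J-2$, which is still consistent. The length-2 case should be checked separately too, since its circulant has doubled off-diagonal entries; this is consistent with $2T_2-2=J^2-4$.

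\emph{Recurrence.} The integral for $\sigma$ now factorizes as $\prod_{\text{cycles}} z_{\ell}$, and the sign is $\prod(\pm1)^{\ell-1}$. We then use the standard combinatorial recursion over the cycle containing particle $N$: there are $(N-1)!/(N-\ell)!$ ways to choose the ordered other members of a cycle of length $\ell$, and the remaining particles are permuted freely among $N-\ell$. Dividing by $N!$ gives
\[
Z_{P,N}=\frac{1}{N}\sum_{\ell=1}^N(\pm1)^{\ell-1}z_\ell Z_{P,N-\ell},\qquad Z_{P,0}=1.
\]
The main obstacle is the bookkeeping in the cycle-decoupling step. One must ensure that the asymmetric $A$/$B$ endpoint coefficients combine exactly into the symmetric circulant form of \Cref{Lem:cyclic-integral-chebyshev}, including the degenerate short cycles, and that all prefactors cancel exactly.
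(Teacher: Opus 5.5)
Your proposal is correct and follows the paper's proof. You combine the $A_{P-1}$/$B_{P-1}$ endpoint coefficients into $J_{P-1}$ along each cycle, evaluate each cycle with \Cref{Lem:cyclic-integral-chebyshev}, and recurse on the cycle containing particle $N$ using the count $(N-1)!/(N-\ell)! = \binom{N-1}{\ell-1}(\ell-1)!$; factoring all cycles at once rather than only the one containing $N$ is cosmetic, and your explicit checks of the $\ell=1,2$ cases and of $J_{P-1}>2$ are slightly more careful than the paper's.
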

\begin{proof}[Proof of~\Cref{Thm:bosonic-pf-partition-function-as-recursive-integral}]
The first equality follows from unwrapping the recurrence for $\RecursiveIntegralPerm{i}{P-1}{\sigma}$. %
For convenience, let us define
\begin{equation*}
\func{I}{\mathbf{x}, \mathbf{y}}
=
\left(
    \frac{\pi}{a}
\right)^{
    \frac{d \left(P - 1\right)}{2}
}
\left(
    \frac{1}{C_{P - 1}}
\right)^{
    \frac{d}{2}
}
\exp{
-\frac{a}{C_{P - 1}}
\left[
    A_{P - 1}
    \mathbf{x}^2
    -
    2 \mathbf{x} \cdot \mathbf{y}
    +
    B_{P - 1}
    \mathbf{y}^2
\right]
}.
\end{equation*}
The proof is then similar to the one provided in Ref.~\citenum{Feldman2023} for the bosonic potential. We write $
Z_{P,N}%
$ as
\begin{equation*}
Z_{P,N} 
=
\frac{1}{\fact{N}}
\left(
    \frac{
        a
    }{
        \pi
    }
\right)^{\frac{d N P}{2}}
\sum_{q=1}^{N}
\sum_{
    \sumcond{
        \sigma \in \SymN{N}
    }{
        \card{\cycleof{\sigma}{N}}=q
    }
}
\left(
    \pm 1
\right)^{
    \rep{\sigma}
}
\int
\dd{
    \posbead{1}{1}
}
\dots
\dd{
    \posbead{N}{1}
}
\prod_{i=1}^{N - q}
\func{I}{
    \posbead{i}{1},
    \posbead{
        \rep{
            \sigma \setminus \cycleof{\sigma}{N}
        }(
            i
        )
    }{
        1
    }
}
\prod_{\ell=N-q+1}^{N}
\func{I}{
    \posbead{\ell}{1},
    \posbead{\ell + 1}{1}
},
\end{equation*}
where $\cycleof{\sigma}{N}$ is the cycle in $\sigma$ that contains $N$, $\rep{\sigma}$ is the representative permutation, and $\posbead{N + 1}{1} = \posbead{N - q + 1}{1}$. The last $q$ integrals can be evaluated independently from the rest, because they belong to a separate cycle. According to~\Cref{Lem:cyclic-integral-chebyshev},
\begin{align*}
&
\int
\prod_{\ell=N-q+1}^{N}
\dd{
    \posbead{\ell}{1}
}
\func{I}{
    \posbead{\ell}{1},
    \posbead{\ell + 1}{1}
}
\\
&=
\left(
    \frac{\pi}{a}
\right)^{
    \frac{
        d q \left(P - 1\right)
    }{
        2
    }
}
\left(
    \frac{1}{C_{P - 1}}
\right)^{
    \frac{dq}{2}
}
\int
\dd{
    \posbead{N-q+1}{1}
}
\dots
\dd{
    \posbead{N}{1}
}
\exp{
    -\frac{a}{C_{P - 1}}
    \sum_{\ell = N - q + 1}^{N}
    \left[
        J_{P - 1}
        \left(
            \posbead{\ell}{1}
        \right)^2
        - 
        2 \posbead{\ell}{1} \cdot \posbead{\ell + 1}{1}
    \right]
}
\\
&=
\left(
    \frac{\pi}{a}
\right)^{
    \frac{
        d q \left(P - 1\right)
    }{
        2
    }
}
\left(
    \frac{1}{C_{P - 1}}
\right)^{
    \frac{dq}{2}
}
\left(
    \frac{
        \pi C_{P-1}
    }{
        a
    }
\right)^{
    \frac{dq}{2}
}
\left(
    2 \func{
        T_q
    }{
        \frac{J_{P-1}}{2}
    }
    - 2
\right)^{
    -\frac{d}{2}
}
\\
&=
\left(
    \frac{\pi}{a}
\right)^{
    \frac{
        d q P
    }{
        2
    }
}
\left(
    2 \func{
        T_q
    }{
        \frac{J_{P-1}}{2}
    }
    - 2
\right)^{
    -\frac{d}{2}
}
,
\end{align*}
where $J_n = A_n + B_n$. (Note that $J_n > 2$ for all $n$ due to the structure of~\Cref{Eq:general-polynomial-recurrence,Eq:general-polynomial-recurrence-init-cond} and the fact that $k_n > 0$.) In the case of fermions, we use the multiplicativity of the sign to write
\begin{equation*}
\left(
    - 1
\right)^{
    \rep{\sigma}
}
=
\left(
    -1
\right)^{
    \rep{
        \sigma \setminus \cycleof{\sigma}{N}
    }
}
\cdot 
\left(
    -1
\right)^{
    \card{\cycleof{\sigma}{N}} - 1
}.
\end{equation*}
Therefore, the partition function is
\begin{equation*}
Z_{P,N} 
=
\frac{1}{\fact{N}}
\left(
    \frac{
        a
    }{
        \pi
    }
\right)^{\frac{d \left(N - q\right) P}{2}}
\sum_{q=1}^{N}
\left(
    \pm 1
\right)^{q - 1}
\left(
    2 \func{
        T_q
    }{
        \frac{J_{P-1}}{2}
    }
    - 2
\right)^{
    -\frac{d}{2}
}
\sum_{
    \sumcond{
        \sigma \in \SymN{N}
    }{
        \card{\cycleof{\sigma}{N}}=q
    }
}
\left(
    \pm 1
\right)^{
    \rep{
        \sigma \setminus \cycleof{\sigma}{N}
    }
}
\int
\prod_{i=1}^{N - q}
\dd{
    \posbead{i}{1}
}
\func{I}{
    \posbead{i}{1},
    \posbead{
        \rep{
            \sigma \setminus \cycleof{\sigma}{N}
        }(
            i
        )
    }{
        1
    }
}.
\end{equation*}
The inner sum over permutations no longer contains the last cycle, and is in fact proportional to $Z_{P,N-q}$. More precisely,
\begin{equation*}
\sum_{
    \sumcond{
        \sigma \in \SymN{N}
    }{
        \card{\cycleof{\sigma}{N}}=q
    }
}
\left(
    \pm 1
\right)^{
    \rep{
        \sigma \setminus \cycleof{\sigma}{N}
    }
}
\int
\prod_{i=1}^{N - q}
\dd{
    \posbead{i}{1}
}
\func{I}{
    \posbead{i}{1},
    \posbead{
        \rep{
            \sigma \setminus \cycleof{\sigma}{N}
        }(
            i
        )
    }{
        1
    }
}
=
\left[
    Z_{P, N-q}
    \fact{\left(N - q\right)}
    \left(
        \frac{
            \pi    
        }{
            a
        }
    \right)^{
        \frac{
            d \left(N - q\right) P
        }{
            2
        }
    }
\right]
\cdot
\binom{N - 1}{q - 1}
\cdot
\fact{\left(q - 1\right)}.
\end{equation*}
The term in the square brackets follows from the definition of $Z_{P,N}$. The terms multiplying the brackets are combinatorial in nature, as they represent the number of possible different cycles of length $q$ that contain $N$. Substituting this into the previous equation yields the desired recurrence.
\end{proof}

\ifmain
\else
  \bibliography{rsc}
  \end{document}
\fi
\fi

\end{document}

\typeout{get arXiv to do 4 passes: Label(s) may have changed. Rerun}